\newtheorem{lemma}{Lemma}[section]
\newtheorem{proposition}{Proposition}[section]
\newtheorem{remark}{Remark}[section]
\newtheorem{theorem}{Theorem}[section]
\newtheorem{definition}{Definition}[section]
\newtheorem{assumption}{Assumption}[section]
\newtheorem{example}{Example}[section]
\numberwithin{equation}{section}
\newcommand{\Real}{\mathbb R}
\newcommand{\expec}{\mathbb{E}}
\newcommand{\F}{\mathcal{F}}
\newcommand{\prob}{\mathcal{P}}
\newcommand{\borel}{\mathcal{B}}
\newcommand{\allprob}{\mathfrak{M}}
\newcommand{\Q}{\mathcal{Q}}
\newcommand{\QP}{\mathcal{Q}_{\phi}}
\newcommand{\dstra}{\mathcal{H}}
\newcommand{\stra}{\mathcal H\times \mathbb{R}^k}
\newcommand{\D}{\mathcal{D}}
\newcommand{\sratio}{\psi}
\begin{document}

\title[]{Quantile Hedging in a Semi-Static Market \\ with Model Uncertainty}

\author[]{Erhan Bayraktar}
\thanks{We thank Yuchong Zhang for her suggestions, which help correct mistakes in an earlier version. Erhan Bayraktar is partially supported by the National Science Foundation (DMS-1613170) and the Susan M. Smith Professorship.}
\address[Erhan Bayraktar]{Department of Mathematics, University of Michigan, 530 Church Street, Ann Arbor, MI 48109, USA}
\email{erhan@umich.edu}
\author[]{Gu Wang}
\address[Gu Wang]{Department of Mathematical Sciences, Worcester Polytechnic Institute, Worcester, MA 01609, USA}
\email{gwang2@wpi.edu}

\begin{abstract}
With model uncertainty characterized by a convex, possibly non-dominated set of probability measures, the agent minimizes the cost of hedging a path dependent contingent claim with given expected success ratio, in a discrete-time, semi-static market of stocks and options. Based on duality results which link quantile hedging to a randomized composite hypothesis test, an arbitrage-free discretization of the market is proposed as an approximation. The discretized market has a dominating measure, which guarantees the existence of the optimal hedging strategy and helps numerical calculation of the quantile hedging price. As the discretization becomes finer, the approximate quantile hedging price converges and the hedging strategy is asymptotically optimal in the original market. \end{abstract}

\keywords{Quantile hedging, model uncertainty, semi-static hedging,  Neyman-Pearson Lemma.}

\maketitle
\thispagestyle{empty}

\section{\textbf{Introduction}}\label{intro}
This paper considers the quantile hedging problem in which the agent is uncertain about the probability distribution of the payoff from investments in stocks and options. This situation of ``model uncertainty" arises naturally from the modeling of the financial market. For example, if the investor tries to estimate a model for stock price dynamics from historical data, statistical analysis gives confidence intervals of model parameters, and therefore infinitely many possible distributions of stocks prices. Thus a model described by a single probability measure bears the risk of misspecification. In this paper, we assume that the model uncertainty is characterized by a convex set $\prob$ of probability measures, which is not dominated by a reference measure, as for example, in the case of stock price with volatility uncertainty.

The quantile hedging problem was first discussed in \cite{FL99}: while superhedging a contingent claim is often very expensive, the agent can effectively lower the required initial capital for hedging, at a cost of a small probability of failure, or more generally, a loss in the expected success ratio (see Definition \ref{success ratio}). Quantile hedging and a closely related problem, maximizing the outperformance of the hedging portfolio, are then studied in \cite{SC99, BHS12, LSY13} among others. In particular, \cite{LSY13} assumes model uncertainty with a dominating measure, which guarantees the existence of a strategy with the optimal performance relative to the target contingent claim. Our paper is the first to discuss the quantile hedging problem in a model uncertainty setting without a dominating measure. 

As in \cite{BHR2001}, we assume that static positions in the options are held until the terminal date, while the stocks can be traded dynamically. Arbitrage and superhedging duality with model uncertainty or semi-static trading are studied by many researchers, see e.g. \cite{BHZ13,BN13,BHP2013,PRN2013,BZ15,BBKN2015,ABPS2016,BDG2016,BFM2015,CHO2016}. \cite{BN13} is particularly relevant, which proves the Fundamental Theorem of Asset Pricing and superhedging duality in our setting. Also related is the literature on conic finance and acceptability pricing (see e.g. \cite{MC2010,MS2011,BCIR2013}), in which acceptable contingent claims are characterized by a set of probability measures, while these measures are all dominated by a reference measure. 

In Section \ref{dual}, we derive dual representations of the quantile hedging price and a closely related quantity: the maximum expected success ratio with given initial capital. These results are based on the superhedging duality in \cite{BN13} and generalize the quantile hedging duality in \cite{FL99} to the case of static trading in options and model uncertainty. The dual representations link quantile hedging to a randomized composite hypothesis test, and indicate that the optimal quantile hedging strategy is the superhedging strategy for the contingent claim modified by the optimal test. The difficulty in our setting is that the set of hypothesis (probability measures) may be non-dominated. None of the extant results on composite hypothesis test applies and the existence of the optimal hedging strategy is not guaranteed (see e.g. \cite{HS73,CK01,LSY13,Gushchin2015} and Remark \ref{hypothesis}). Furthermore, the complex structure of the set of martingale measures inhibits the calculation of the quantile hedging price.

Section \ref{disc} contains the main result of this paper: an approximation of the quantile hedging price, which also guarantees the existence of an asymptotically optimal hedging strategy. The approximation is carried out via a discretization of the path space of stock price and the definition of model uncertainty on the discretized space, which are themselves interesting. The natural discretization used in \cite{BHZ13,Dolinsky14} may lead to arbitrage opportunities in the discretized market (see Example \ref{counter}), because we drop the key assumption in these papers that $\prob$ includes all probability measures on the path space. While the argument for the no arbitrage condition and superhedging is easier under this assumption, the quantile hedging problem becomes uninteresting, because every path is of probability one under some model, and quantile hedging price becomes linear in the target expected success ratio (see Example 1 in Section \ref{example}). To deal with a general convex set of models, we add extreme values of the stocks price at each time to the original path space, and construct a set of probability measures on the discretized space which assign arbitrarily small, positive probabilities to the added paths (see Definition \ref{ndef}). Theorem \ref{NA-n} shows that the discretized market satisfies the no arbitrage condition with semi-static trading in stocks and options, if it is sufficiently close to the original market. 

Since a dominating measure exists in the discretized model, the generalized Neyman-Pearson Lemma in \cite{LSY13,Gushchin2015} gives the solution to the associated randomized composite hypothesis test from the dual representations in Section \ref{dual}, and guarantees the existence of the optimal quantile hedging strategy in the discretized market. The agent can use this strategy as an approximation in the original market. Its performance can be quantified (Theorem \ref{convergence} and \ref{convergence-c}): to achieve the target expected success ratio in the original market, the agent can use the approximate strategy corresponding to a higher expected success ratio, which is arbitrarily close to the target, with some extra initial capital, which accounts for the discretization error. Furthermore, if the set of models in the original market includes sufficiently many probability measures that are finitely supported, which always allow price movements greater than or equal to a given (arbitrarily small) threshold (see definition \ref{JC}), then the quantile hedging price in the discretized market converges to that in the original market, and the approximate strategy is asymptotically optimal. The maximum expected success ratio and the quantile hedging price can be calculated numerically, by solving a nonlinear programming problem, as demonstrated in Section \ref{example}. 

The rest of the paper is organized as follows: Section \ref{model} sets up a semi-static market with model uncertainty, and defines the quantile hedging problem. Section \ref{dual} presents the dual representations of the quantile hedging price and the maximum expected success ratio. Section \ref{discrete} defines the discretized market and the corresponding model uncertainty, which are shown to satisfy the no arbitrage condition. Section \ref{discrete-q} solves the quantile hedging problem in the discretized market. Section \ref{approximate}  examines the performance of the approximate hedging strategy in the original market and the convergence of the approximate quantile hedging price. Section \ref{example} shows  examples of numerical calculation of the quantile hedging price. Some technical lemmas are in the Appendix.

\subsection{Notations} The following is a summary of the measure theoretical notations that are used frequently in the rest of the paper: given a topological space $\Omega$, let $\borel(\Omega)$ be its Borel $\sigma$-field, and $\allprob(\Omega)$ be the set of all probability measures on $\borel(\Omega)$.  If $\Omega$ is a Polish space (separable and completely metrizable topological space), $A\subset \Omega$ is analytic if it is the image of a Borel subset of another Polish space under a Borel-measurable mapping. A function $f:\Omega\rightarrow [-\infty,\infty]$ is called upper semianalytic if the inverse image of $(c,\infty]$ is analytic for every $c\in\Real$. The universal completion of $\borel(\Omega)$ is defined as $\displaystyle\cap_{P\in\allprob(\Omega)}\borel(\Omega)^P$, where $\borel(\Omega)^P$ is its $P$-completion. Note that from \cite[Chapter 7]{BS78} and \cite[Chapter 3, Appendix 2]{DY79}, any Borel set in $\Omega$ is analytic and any analytic set is measurable with respect to the universal completion of $\borel(\Omega)$, which is referred to as a universally measurable set. Furthermore, any Borel function is upper semianalytic and universally measurable. Finally, for any $P,Q\in\allprob(\Omega)$ and $\prob\subset\allprob(\Omega)$, write $Q\ll P$ if $Q$ is absolutely continuous with respect to $P$, $Q\sim P$ if they are equivalent, and $Q\lll \prob$ if $Q \ll P$ for some $P\in\prob$.

\section{\textbf{Model}}\label{model}
In this section, we set up the model of a discrete-time financial market with non-dominated model uncertainty, and define the quantile hedging problem. Assumptions (in particular, Assumptions \ref{NA} and \ref{option}) made in this section apply to the rest of the paper, without further notice.

\subsection{Market}
Consider the setup in \cite{BN13}: let $T \in \mathbb{N}$, and $\Omega_1\subset \mathbb R^d$ be a bounded Polish space. Let $\Omega_0$ be a singleton, and for $t\in \left\{1, 2, \dots,T\right\}$, $\Omega_t =\Omega_0\times \Omega_1^t$, where $\Omega^t_1$ is the Cartesian product of $\Omega_1$. Let $\F_t$ be the universal completion of $\borel(\Omega_t)$. Denote $(\Omega_T,\F_T)$ as $(\Omega,\F)$.

For each $t\in \left\{0,\cdots,T-1\right\}$, $\Omega_t$ represents the path space of $d$ stock prices up to $t$.  For each $\omega \in \Omega_t$, there is a non-empty convex set $\prob_t(\omega) \subseteq \allprob(\Omega_1)$, which represents the set of all possible models of the stock price at $t+1$, given the price history $\omega$. Assume that the graph of $\prob_t$ is analytic, which ensures that there exists a universally measurable selector (see \cite[Chapter 7]{BS78}): $P_t:\Omega_t \rightarrow \allprob(\Omega_1)$, such that $P_t(\omega;\cdot) \in \prob_t(\omega)$ for every $\omega \in \Omega_t$. Given kernels $P_t$ for $t=0,1,\dots,T-1$, a probability measure on $\Omega$ can be defined by: for any $A\in\mathcal{F}$,
\begin{equation}
P(A)=\int_{\Omega_1}\dots\int_{\Omega_1}I_{A}(\omega_0,\omega_1,\dots,\omega_T)P_{T-1}(\omega_1,\dots,\omega_{T-1};d\omega_T)\cdots P_0(\omega_0;d\omega_1),
\end{equation}
which is denoted as $P = P_0\otimes \cdots \otimes P_{T-1}$. The collection of all possible models of the market can be written as $\prob =\{P_0\otimes \cdots \otimes P_{T-1}: P_t(\cdot)\in \prob_t(\cdot), t = 0,1,\dots,T-1\}$, where each $P_t$ is a universally measurable selector of $\prob_t$. 

$A\in \mathcal F$ is $\prob$-polar if $P(A) = 0$ for every $P\in\prob$. A property holds $\prob$-quasi surely ($\prob$-q.s.), if it holds outside a $\prob$-polar set. For notational convenience, $\prob$ in front of q.s. is dropped in the rest of the paper, unless ambiguity arises. Also, $(\omega_0,\dots,\omega_t) \in \Omega_t$ is said to be $\prob$-polar, if $P\left((\omega_0,\dots,\omega_t)\times \left(\Omega_1\right)^{T-t}\right) =0$, for every $P\in\prob$.

Assume the risk-free rate is 0 in the financial market and the stock price at time $0\leq t\leq T$ is $S_t(\omega_0,\dots,\omega_t) = \omega_t$. Let $\dstra = \left\{H = (H_t)_{t=0}^{T-1}\right\}$ be the set of dynamic strategies in stocks, where for each $t$, $H_t:\Omega_t\rightarrow \mathbb R^d$ is $\mathcal F_t$-measurable, and $(H\cdot S)_t = \sum\limits_{i=0}^{t-1}H_i(S_{i+1}-S_i)$ is the return from strategy $H \in \dstra$ until time $t$. Let $\phi: \Omega\rightarrow \mathbb R^k$ be a $\mathcal{B}(\Omega)$-measurable random payoff of $k$ options that the investor can trade statically, i.e. buy or sell at time $0$ and hold the position until $T$, and $p\in\mathbb{R}^k$ be the price of $\phi$ at $t = 0$. $(H,q)\in\dstra\times \Real^k$ is called a semi-static strategy in stocks and options. With initial capital $x$, the corresponding terminal payoff is denoted by
$$G^{x,H,q}=x + (H\cdot S)_T + q(\phi-p).$$
For $H\in\dstra$ and $\omega = (\omega_1,\dots, \omega_T)\in\Omega$, let $||H(\omega)|| = \max\left(|H_0|, |H_1(\omega_0,\omega_1)|,\dots, |H_{T-1}(\omega_0,\dots,\omega_{T-1})|\right)$, where $|x| = \displaystyle\max_{1\leq i\leq d}|x_i|$ for any $x = (x_1,\dots, x_k)\in\Real^d$. 

Note that the boundedness of $\Omega_1$ means the agent assigns zero probability to prices beyond the bounds. This assumption is not very restrictive, because it is equivalent to saying that call options with strikes above the upper bounds and put options with strikes below the lower bounds are worthless to the agent. The common bounds for $d$ stocks also does not lose any generality. If the stocks have different bounds, we can always enlarge the original path space to a larger one with common bounds, and assign zero probability to the added paths.

\begin{definition}
For a $T$-period discrete time financial market with the path space of stock prices $\Omega$, model uncertainty $\prob\subset \allprob(\Omega)$, and options with payoff $\phi$ and price $p$, the no arbitrage condition with semi-static trading in stocks and options (denoted as NA$_{\phi}(\Omega,\prob)$) holds, if for any $(H,q)\in\dstra\times\Real^k$,
\begin{equation*}
G^{0,H,q} \geq 0 \text{ q.s. implies }    G^{0,H,q} = 0 \text{ q.s.}
\end{equation*}
Similarly, NA$(\Omega,\prob)$ denotes the no arbitrage condition with only dynamic trading in stocks ($q = 0$).
\end{definition}

Assume NA$_{\phi}(\Omega,\prob)$ holds, which is equivalent to (see \cite[Theorem 5.1]{BN13}):
\begin{assumption}\label{NA}
For every $P\in\prob$, there exists $Q\in\mathcal{Q}_{\phi}$, such that $P\ll Q $, where 
\begin{equation*}
\QP = \left\{Q\in\allprob\left(\Omega\right): Q\lll \prob, S \text{ is a martingale under } Q,\text{ and } \expec_Q[\phi] = p\right\}
\end{equation*}
is the set of martingale measures that fit with the given market price $p$ of $\phi$. 
\end{assumption}

The following superhedging duality, which is a part of Theorem 5.1 of \cite{BN13}, is useful for the discussion in Section \ref{dual}:
\begin{lemma}\label{suphedge}
	If Assumption \ref{NA} holds, and $X:\Omega\rightarrow \mathbb{R}$ is upper semianalytic, then the superhedging price
	\begin{equation*}
	\pi(X):=\inf\left\{x\in \mathbb{R}:\exists (H,q)\in \stra \text{ such that } G^{x,H,q}\geq X \text{ q.s.}\right\}
	\end{equation*}
	satisfies $\pi(X) = \sup\limits_{Q\in\mathcal{Q}_{\phi}}\mathbb{E}_Q[X]$, and there exists $(H,q)\in \stra$ such that $G^{\pi(X),H,q}\geq X$ q.s. Furthermore, the same holds with only dynamic trading in stocks, with $\QP$ replaced by $\Q = \left\{Q\in\allprob\left(\Omega\right): Q\lll \prob, S \text{ is a martingale under } Q\right\}$.
\end{lemma}

We also make the following assumption on $\phi$:
\begin{assumption}\label{option}
There does not exist $(H,q) \in \dstra\times\Real^k$ with $q\neq 0$, such that $G^{0,H,q} = 0$ q.s.
\end{assumption}

Assumption \ref{option} says that none of the options can be replicated by a semi-static strategy in stocks and other options, which does not lose any generality of the model. The reason is that under the no arbitrage condition, any terminal payoff from a position in a redundant option can be achieved by its replication portfolio with the same cost.

\subsection{Quantile Hedging}
An agent in this market trades with semi-static strategies and hedges against an upper semianalytic contingent claim $F: \Omega\rightarrow \mathbb{R}_+$ at $T$, where $\mathbb{R}_+ =\{x\in\mathbb R: x \geq 0\}$. The performance of a hedging strategy relative to $F$ is measured in terms of success ratio.
\begin{definition}(Success Ratio, cf. Definition (2.32) in \cite{FL99})\label{success ratio}
	For any upper semianalytic random payoff $F':\Omega\rightarrow\mathbb R_{+}$, the success ratio of $F'$ relative to $F$ is the universally measurable function defined as $\psi^{F'} := I_{\{F'\geq F\}} + \frac{F'}{F}I_{\{F'<F\}}\in [0,1]$.
\end{definition}

The goal of the agent is to minimize the hedging cost of $F$, with a given level of expected success ratio, in the worst case of all the models in $\prob$:
\begin{definition}(Quantile Hedging Price)\label{quant-price}
	For any $\alpha\in[0,1]$, the quantile hedging price of $F$ with expected success ratio $\alpha$ is:
	\begin{equation}
	\pi(\alpha,F) := \inf\left\{x\geq 0: \exists (H,q) \in \stra \text{ and } F'\in \mathcal{A}(\alpha) \text{ s.t. } G^{x,H,q} \geq F' q.s.\right\}, \label{goal1}
	\end{equation}
	where $\mathcal{A}(\alpha) = \left\{F':\Omega \rightarrow \mathbb{R}_{+} | F'\text{ is upper sermianalytic, and } \inf\limits_{P\in\prob}\mathbb{E}_P\left[\psi^{F'}\right] \geq  \alpha\right\}$.
\end{definition}

$\mathcal A(\alpha)$ is the set of upper semianalytic contingent claims which hedge $F$ with expected success ratio greater than or equal to $\alpha$, in the worst case of all the models in $\prob$ and the quantile hedging price is the minimum cost of superhedging some $F'\in \mathcal A(\alpha)$.

\begin{remark}
	We choose expected success ratio, instead of success probability, as the agent's criterion for the performance of the hedging portfolio. The reason is that, even without model uncertainty, the solution to the quantile hedging problem with given success probability may not exist, while it always exists if success ratio is considered (see \cite{FL99}). Furthermore, with model uncertainty, the agent has more flexibility when targeting an expected success ratio, because the hedging strategy's performance on the paths where the superhedging fails also counts. To target a success probability is more restrictive and can be very expensive. An example in which quantile hedging with any given success probability requires superhedging is provided in Section \ref{example}.
	\end{remark}

\section{\textbf{Dual Representations}}\label{dual}
In this section, as a preparation for the main results, we present dual representations of the quantile hedging price, and a closely related quantity: the maximum expected success ratio from given initial capital. The dual representations link quantile hedging to a randomized composite hypothesis test, which helps calculate of the quantile hedging price. These results are extensions of the dual representations in \cite{FL99} with $\prob$ being a singleton, and in \cite{LSY13} where $\prob$ and $\QP$ has a dominating measure.

The following proposition gives a dual representation of the quantile hedging price, of which the proof follows an idea similar to \cite{FL99}: quantile hedging a contingent claim is equivalent to superhedging a smaller one which achieves the target expected success ratio.
\begin{proposition}\label{quant-price1}
Let $\mathcal {A}^F(\alpha) = \{F'\in\mathcal A(\alpha): F'\leq F\}$. For $0\leq \alpha\leq 1$,
\begin{equation}
\pi(\alpha,F)= \inf_{F'\in\mathcal{A}^F(\alpha)}\sup_{Q\in\mathcal{Q}_{\phi}}\mathbb{E}_Q\left[F'\right].\label{dual1}
\end{equation}
\end{proposition}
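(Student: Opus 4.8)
The plan is to prove the two inequalities separately, using the superhedging duality (Lemma~\ref{suphedge}) in both directions.

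\emph{Upper bound.} Fix $F'\in\mathcal{C}(\alpha)$. Since $F'$ is upper semianalytic, nonnegative, and $F'\le F$, Assumption~\ref{F} gives $\sup_{Q\in\QP}\expec_Q[F']\le\sup_{Q\in\QP}\expec_Q[F]<\infty$, so Lemma~\ref{suphedge} applies to $F'$: it yields $\pi(F')=\sup_{Q\in\QP}\expec_Q[F']$ and a semi-static strategy $(H,q)\in\stra$ with $G^{\pi(F'),H,q}\ge F'$ q.s. As $\pi(F')\ge0$ and $F'\in\mathcal{A}(\alpha)$, the triple $(\pi(F'),(H,q),F')$ is admissible in \eqref{goal1}, whence $\pi(\alpha,F)\le\pi(F')$. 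Taking the infimum over $F'\in\mathcal{C}(\alpha)$ gives $\pi(\alpha,F)\le\inf_{F'\in\mathcal{C}(\alpha)}\sup_{Q\in\QP}\expec_Q[F']$.

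\emph{Lower bound.} Let $x\ge0$, $(H,q)\in\stra$ and $F'\in\mathcal{A}(\alpha)$ with $G^{x,H,q}\ge F'$ q.s. Set $\tilde F':=F'\wedge F$. The crucial observation is that $\psi^{\tilde F'}=\psi^{F'}$ pointwise: on $\{F=0\}$ both equal $1$ (since $F'\ge0$ forces $\{F'<F\}=\emptyset$ there), on $\{F'\ge F>0\}$ both equal $1$ (since $\tilde F'=F\ge F$ there), and on $\{F'<F\}$ both equal $F'/F$ (since $\tilde F'=F'$ there). Hence $\inf_{P\in\prob}\expec[\psi^{\tilde F'}]=\inf_{P\in\prob}\expec[\psi^{F'}]\ge\alpha$. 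Because a minimum of two upper semianalytic functions is upper semianalytic (analytic sets being stable under finite intersections) and $0\le\tilde F'\le F$, we conclude $\tilde F'\in\mathcal{C}(\alpha)$. Moreover $G^{x,H,q}\ge F'\ge\tilde F'$ q.s., so $(H,q)$ superhedges $\tilde F'$ with initial capital $x$; thus $x\ge\pi(\tilde F')=\sup_{Q\in\QP}\expec_Q[\tilde F']\ge\inf_{F''\in\mathcal{C}(\alpha)}\sup_{Q\in\QP}\expec_Q[F'']$ by Lemma~\ref{suphedge}. Taking the infimum over all such $x$ yields the reverse inequality, and combining the two completes the proof.

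\emph{Main obstacle.} The argument is short, and the only genuinely delicate steps are the pointwise identity $\psi^{F'\wedge F}=\psi^{F'}$ — in particular handling the degenerate set $\{F=0\}$ — and checking that truncation keeps the modified claim inside $\mathcal{C}(\alpha)$, i.e.\ that upper semianalyticity survives the operation $\wedge$ and that the expected-success-ratio constraint is unchanged. Nonemptiness of $\mathcal{C}(\alpha)$ (it contains $F$, for which $\psi^F\equiv1$) guarantees that the right-hand side, hence $\pi(\alpha,F)$, is finite.
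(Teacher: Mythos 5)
Your proof is correct and takes essentially the same route as the paper: both directions invoke the superhedging duality of Lemma~\ref{suphedge}, with the lower bound resting on the replacement $F'\mapsto F'\wedge F$. The only difference is that you spell out explicitly why $F'\wedge F\in\mathcal{C}(\alpha)$ (via $\psi^{F'\wedge F}=\psi^{F'}$ and stability of upper semianalyticity under $\wedge$), a step the paper leaves implicit.
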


\begin{proof}
For any $x\geq 0$, if there exists $(H,q)\in\stra$, $F'\in\mathcal A(\alpha)$ and $G^{x,H,q}\geq F'$ q.s., then from Lemma \ref{suphedge}, $x\geq\sup\limits_{Q\in\mathcal{Q}_{\phi}}\mathbb{E}_Q\left[F'\right]\geq \sup\limits_{Q\in\mathcal{Q}_{\phi}}\mathbb{E}_Q\left[F'\wedge F\right]$. 

Since $F'\wedge F$ is also upper semianalytic and $\psi^{F'\wedge F} = \psi^{F'}$, $F'\wedge F\in\mathcal A^F(\alpha)$. Thus $x\geq  \inf\limits_{F'\in\mathcal{A}^F(\alpha)}\sup\limits_{Q\in\mathcal{Q}_{\phi}}\mathbb{E}_Q\left[F'\right]$, which implies that $\pi(\alpha,F) \geq \inf\limits_{F'\in\mathcal{A}^F(\alpha)}\sup\limits_{Q\in\mathcal{Q}_{\phi}}\mathbb{E}_Q\left[F'\right]$.

On the other hand, if $F'\in\mathcal{A}^F(\alpha)$, let $x = \displaystyle\sup_{Q\in\mathcal{Q}_{\phi}}\mathbb{E}_Q\left[F'\right]\geq 0$. From Lemma \ref{suphedge}, there exists $(H,q)\in \stra$ such that $G^{x,H,q}\geq F'$ q.s. Since $F'\in\mathcal A^F(\alpha)\subset\mathcal A(\alpha)$, by definition $\displaystyle\sup_{Q\in\mathcal{Q}_{\phi}}\mathbb{E}_Q\left[F'\right] = x \geq \pi(\alpha,F)$. Since this holds for any $F'\in\mathcal{A}^F(\alpha)$, we obtain
$\displaystyle\inf_{F'\in\mathcal{A}^F(\alpha)}\sup_{Q\in\mathcal{Q}_{\phi}}\mathbb{E}_Q\left[F'\right] \geq \pi(\alpha,F)$.
\end{proof}

Together with Lemma \ref{suphedge}, this dual representation implies that if a minimizer $\hat F'$ in (\ref{dual1}) exists, then the optimal quantile hedging strategy is the superhedging strategy of the modified claim $\hat F'$.  

To better understand and help calculate the quantile hedging price, we also consider the ``inverse" (see Proposition \ref{inverse}) of the quantile hedging problem: with initial capital $x\geq 0$, the agent aims to maximize the expected success ratio among all non-negative, upper semianalytic payoffs, which are bounded by $F$, and can be superhedged from $x$, in the worst case of all the models in $\prob$:
\begin{equation}
V(x,F) = \sup_{F'\in\mathcal C(x)}\inf_{P\in\prob}\mathbb{E}_P\left[\psi^{F'}\right],\label{goal2}
\end{equation}
where $\mathcal C(x) = \{F': \Omega \rightarrow \mathbb{R}_{+}| F'\text{ is upper semianalytic, } F'\leq F, \text{ and } \pi(F')\leq x\}$. By definition, if there exists $\hat F' \in \mathcal{C}(x)$ that achieves the supremum in (\ref{goal2}), then $(\hat H,\hat q)$ that superhedges $\hat F'$ achieves the maximum expected success ratio. In particular, the no arbitrage condition implies that $C(0) = \{0\}$, and the supremum is always achieved.

\begin{lemma}\label{max-ratio} 
$V(x,F)$ is non-decreasing and concave in $x \in [0,\infty)$.
\end{lemma}

\begin{proof}
For any $x_1 > x_2 \geq 0$, $\mathcal C(x_2)\subset \mathcal C(x_1)$, thus $V(x_1,F) \geq V(x_2,F)$.

For concavity, consider $F'_1\in\mathcal C(x_1), F'_2\in\mathcal C(x_2)$, and $0\leq \lambda \leq 1$, let $F' = \lambda F'_1 + (1-\lambda)F'_2$. The superhedging prices satisfy
\begin{equation*}
\pi(F')\leq \pi(\lambda F'_1)  + \pi((1-\lambda)F'_2) \leq \lambda x_1 + (1-\lambda)x_2.
\end{equation*} 
Thus $F'\in \mathcal C(\lambda x_1 + (1-\lambda)x_2)$.

$F',F'_1$ and $F'_2 \leq F$ implies that $\psi^{F'} = \frac{F'}{F} =\lambda\frac{F'_1}{F} + (1-\lambda)\frac{F'_2}{F}= \lambda\psi^{F'_1} + (1-\lambda)\psi^{F'_2}$. Then
\begin{align*}
\lambda V(x_1,F) + (1-\lambda) V(x_2,F) =& \lambda\sup_{F'_1 \in \mathcal{C}(x_1)}\inf_{P\in\prob} \mathbb{E}_P\left[\psi^{F'_1}\right] + (1-\lambda)\sup_{F'_2 \in \mathcal{C}(x_2)}\inf_{P\in\prob} \mathbb{E}_P\left[\psi^{F'_2}\right]\\
=&\sup_{F'_i \in \mathcal{C}(x_i),i=1,2}\left(\inf_{P\in\prob} \mathbb{E}_P\left[\lambda\psi^{F'_1}\right]+\inf_{P\in\prob} \mathbb{E}_P\left[ (1-\lambda)\psi^{F'_2}\right]\right)\\
\leq&\sup_{F'_i \in \mathcal{C}(x_i),i=1,2}\inf_{P\in\prob} \mathbb{E}_P\left[\lambda\psi^{F'_1}+(1-\lambda)\psi^{F'_2}\right]\\
\leq& \sup_{F' \in \mathcal{C}(\lambda x_1+(1-\lambda)x_2)}\inf_{P\in\prob} \mathbb{E}_P\left[\psi^{F'}\right] = V(\lambda x_1 + (1-\lambda)x_2,F).\qedhere
\end{align*}
\end{proof}

\begin{proposition}\label{inverse}
If for every $x>0$, there exists $F'\in \mathcal C(x)$ that maximizes (\ref{goal2}), then $\pi(\alpha,F) = \inf\{x \geq 0: V(x,F) \geq \alpha\}$. Particularly, if $V(\pi(\alpha,F),F) \geq \alpha$, and $\hat F'$ maximizes (\ref{goal2}) with $x = \pi(\alpha,F)$, then the superhedging strategy for $\hat F'$ is the optimal quantile hedging strategy corresponding to the expected success ratio $\alpha$.
\end{proposition}

\begin{proof}
Let $\hat x = \inf\{x\geq 0: V(x,F) \geq \alpha\}$, then by the assumption, for any $\epsilon>0$, there exists $F'\in\mathcal C(\hat x +\epsilon)$, such that $\inf\limits_{P\in\prob}\mathbb{E}_P\left[\psi^{F'}\right] \geq \alpha$. Then $F'\in\mathcal A(\alpha)$, and can be superhedged with initial capital $\hat x + \epsilon$. Thus $\pi(\alpha,F)\leq \hat x + \epsilon$, which holds for every $\epsilon>0$. Therefore $\pi(\alpha,F)\leq \hat x$.

On the other hand, from the definition of the quantile hedging price, for any $\epsilon > 0$, there exists $(H^{\epsilon},q^{\epsilon})\in\stra$ and $F'\in\mathcal A(\alpha)$, such that $G^{\pi(\alpha,F)+\epsilon,H,q} \geq F' \geq F'\wedge F$ q.s. Since $\psi^{F} = \psi^{F'\wedge F}$, $F'\wedge F \in \mathcal A^F(\alpha)$. This implies that $V(\pi(\alpha,F) + \epsilon,F)\geq \alpha$, and $\pi(\alpha,F) + \epsilon \geq \hat x$, which holds for any $\epsilon >0$. Thus $\pi(\alpha,F) \geq \hat x$.

If $V(\pi(\alpha,F),F) \geq \alpha$ and $\hat F'$ is the maximizer in (\ref{goal2}), then $\inf\limits_{P\in\prob}\mathbb{E}_P\left[\psi^{\hat F'}\right] \geq \alpha$ and $\hat F' \leq F$, which implies that $\hat F'\in\mathcal A^F(\alpha)$. Then from Proposition \ref{quant-price1},  $\pi\left(\hat F'\right) \geq  \pi(\alpha,F)$. Furthermore, since $\hat F' \in \mathcal C(\pi(\alpha,F))$, from its definition we obtain $\sup\limits_{Q\in\mathcal{Q}_{\phi}}\mathbb{E}_Q\left[\hat F'\right] = \pi\left(\hat F'\right) \leq \pi(\alpha,F)$. Thus $\pi\left(\hat F'\right) = \pi(\alpha,F)$, and the superhedging strategy of $\hat F'$ is the optimal quantile hedging strategy.  
\end{proof} 

Notice that given $x\geq 0$, for any $F'\in\mathcal C(x)$, the definition of its success ratio implies that $F\psi^{F'} = F'$ and $\psi^{F'}\in[0,1]$. Letting $\mathcal C = \{F': \Omega \rightarrow \mathbb{R}_{+}| F'\text{ is upper semianalytic, and } F'\leq F\}$, then $F'\in\mathcal C(x)$ is equivalent to the constraint that $\psi:\Omega \rightarrow [0,1]$, $\pi(F\psi) \leq x$ and $\psi = \frac{F'}{F}$ for some $F'\in \mathcal C$. Thus (\ref{goal2}) can be written as
\begin{align}\label{test}
&{V}(x,F) = \sup_{\psi:\Omega \rightarrow [0,1]}\inf_{P\in\prob} \mathbb{E}_{P}[\psi]\\
&\text{subject to } \sup_{Q\in \QP}\mathbb{E}_{Q}[F\psi] \leq x, \text{ and } \psi = \frac{F'}{F} \text{ for some }F' \in \mathcal C.\nonumber
\end{align}
(\ref{test}) can be regarded as a randomized composite hypothesis test discussed in \cite{LSY13,Gushchin2015}, in which $\prob$ and $\QP$ (together with $F$) correspond to the set of alternate and null hypothesis, respectively, and the goal of (\ref{test}) is to find an optimal test, which maximizes the power of the test, given significance level $x$. There is also a constraint that only tests equal to ratios of two upper semianalytic functions are considered. The difference from \cite{LSY13,Gushchin2015} is that there does not exist a dominating measure for both $\prob$ and $\QP$.

To summarize, the dual representations in this section suggest that to solve the quantile hedging problem, we can try to solve the associated hypothesis testing problem. If $V(\pi(\alpha,F),F) \geq \alpha$, and the corresponding optimal test, or equivalently, the maximizer $\hat F'$ in (\ref{goal2}) exists, then the superhedging strategy of $\hat F'$ achieves expected success ratio $\alpha$ and thus is the optimal quantile hedging strategy.

\begin{remark}\label{hypothesis}
To the best of our knowledge, the extant generalization of the Neyman-Pearson Lemma, which guarantees the existence of a solution to composite hypothesis tests all require certain properties of the set of hypotheses (probability measures) under consideration. None of these assumptions hold in our setting, and the optimal test may not exist. \cite{CK01,LSY13,Gushchin2015} assume that a dominating measure exists. \cite{HS73,Augustin02} require that $\prob$ and $\QP$ are disjoint, $\mathcal P = \{P\in\allprob(\Omega): P\leq v_1\}$ and $\mathcal Q_{\phi} = \{Q\in\allprob(\Omega): Q\leq v_2\}$, where $v_1$ and $v_2$ are 2-alternating capacities, which does not always hold: (i) Given $\mathcal P$, if we define $v_1 (A) = \sup\limits_{P\in\mathcal P}P(A)$ for every $A\in\borel (\Omega)$, then $v_1$ is not necessarily a 2-alternating capacity (see Example 2 in [20]). (ii) Even if $v_1$ is a 2-alternating capacity, the set $\{P\in\allprob(\Omega): P\leq v_1\}$ may be strictly larger than $\mathcal P$ (see Example 1 in [20]). Furthermore, the dual representations in this section is not very helpful to the calculation of the quantile hedging price or the maximum expected success ratio, because $\QP$ is very difficult to characterize. 

Thus, in the next section, we provide an approximation method that helps the agent calculate an approximate quantile hedging price numerically, and guarantees the existence of an approximate hedging strategy, which is asymptotically optimal.
\end{remark}

\begin{remark}
All the results in this section hold if $F$ and $F'$ in Definitions \ref{success ratio}, \ref{quant-price} and (\ref{goal2}) are assumed to be Borel measurable\footnote{We thank one of the reviews to point this out.}. Every Borel measurable function is upper semianalytic, and we work with upper semianalytic functions, for the sake of generality.
\end{remark}

\section{\textbf{Approximation of the quantile hedging price and strategy}}\label{disc}
In this section and for the rest of the paper, without loss of generality, assume that the initial stock price is $S_0 = \mathbf{1}$, where for any constant $a$, $\mathbf{a}$ is the vector of the appropriate dimension, with entries identically equal to $a$, and $\Omega_1 = [a,b]^{d}$, where $0 \leq a  < 1<  b < \infty$ are dyadic numbers. 
 
\subsection{Discretized Market}\label{discrete}
In order to approximate the quantile hedging price and the hedging strategy, we discretize the path space, define model uncertainty in the discretized market and show that it satisfies the no arbitrage condition when the discretization is sufficiently fine.

We consider a discretized market, in which stock prices take values in  $\D_n = \{0,1/2^n,2/2^n,\dots\}$. For any $\omega^n = (\omega^n_0,\omega^n_1,\dots,\omega^n_T)\in \Omega_0 \times \left(\Omega_1\cap \D_n^d\right)^T$, denote as $\omega^n_{t,i}$ the $i$-th entry of $\omega^n_t$ for $0\leq t\leq T$, and we introduce the following notation:

\begin{definition}
Let $J^n_1(\omega^n_0,\omega^n_1) = \{\omega^n_0\}\times \prod\limits_{i=1}^d K^{n}(\omega^n_{0,i},\omega^{n}_{1,i})$, where
\begin{equation}\label{J}
K^{n}(x,y)= \begin{cases}
\left[y-1/2^{n+1}, y+1/2^{n+1}\right]\cap[a,b] & \text{ if } y = x,\\
\left(y-1/2^{n+1}, y+1/2^{n+1}\right]\cap[a,b], &\text{ if } y > x,\\
\left[y-1/2^{n+1}, y+1/2^{n+1}\right)\cap[a,b], &\text{ if } y<x.
\end{cases}
\end{equation}
For $t\geq 2$, let $J^n_t(\omega^n_0,\dots, \omega^n_t) = J^n_{t-1}(\omega^n_0,\dots,\omega^n_{t-1})\times \prod\limits_{i=1}^d K^{n}(\omega^{n}_{t-1,i},\omega^{n}_{t,i})$, and write $J^n_T(\omega^n_0,\omega^n_1,\dots,\omega^n_T)$ as $J^n(\omega^n)$.
\end{definition}

As defined above, $J^n(\omega^n)$ is the collection of all the paths in $\Omega$ that is within $1/2^{n+1}$ from $\omega^n$ in supnorm.

\begin{lemma}\label{partition}
For every $\omega\in\Omega$, there exists a unique $\omega^n\in \Omega_0 \times \left(\Omega_1\cap \D_n^d\right)^T$, such that $\omega\in J^n(\omega^n)$, i.e. $\left\{J^n(\omega^n):\omega^n\in  \Omega_0 \times \left(\Omega_1\cap \D_n^d\right)^T\right\}$ is a partition of $\Omega$.
\end{lemma}

\begin{proof}
For each $\omega\in\Omega$, the corresponding $\omega^n\in  \Omega_0 \times \left(\Omega_1\cap \D_n^d\right)^T$ can be found in an inductive way. First, $\omega^n_0 = \omega_0$. Suppose for $t \geq 1$, $\omega^n_0,\dots,\omega^n_{t-1}$ are uniquely determined, such that $(\omega_0,\dots,\omega_{t-1})\in J^n_{t-1}\left(\omega^n_0,\dots,\omega^n_{t-1}\right)$. Then $(\omega_0,\dots,\omega_{t})\in J^n_{t-1}\left(\omega^n_0,\dots,\omega^n_{t-1}\right)\times \Omega_1$.

For each $1\leq i\leq d$, since $\left\{K^n(\omega^n_{t-1,i},y): y\in [a,b]\cap \D_n\right\}$ is a partition of $[a,b]$, there exists a unique $y_i$ such that $\omega_{t,i} \in K^n(\omega^n_{t-1,i},y_i)$. Thus, with $\omega^n_t = (y_1,\dots,y_d)$, which is uniquely determined, $(\omega_0,\dots,\omega_{t})\in  J^n_{t-1}(\omega^n_0,\dots,\omega^n_{t-1})\times \prod\limits_{i=1}^d K^{n}(\omega^{n}_{t-1,i},y_i) = J^n_t(\omega^n_0,\dots,\omega^n_t)$.
\end{proof}

Since  $\left\{J^n(\omega^n):\omega^n\in  \Omega_0 \times \left(\Omega_1\cap \D_n^d\right)^T\right\}$ is a partition of $\Omega$, for the discretized path space $\Omega_0\times \left(\Omega_1\cap \D_n^{d}\right)^T$, a natural choice for the model uncertainty on it is that for every $P\in\prob$, define $P^n$ as $P^n(\omega^n) = P(J^n(\omega^n))$ for every $\omega^n\in\Omega_0\times \left(\Omega_1\cap \D_n^{d}\right)^T$, which is used in \cite{BHZ13,Dolinsky14}. However, this definition may lead to arbitrage opportunities in the discretized market and modifications are needed as the following example demonstrates.

\begin{example}\label{counter}
Suppose $d=1$, $T = 2$, $a = \frac{1}{2}$ and $b = \frac{3}{2}$ and there are no options. Assume $\prob = \{P\}$ is a singleton and $P_0(1;\cdot)$ is the uniform distribution on $\left[\frac{1}{2},\frac{3}{2}\right]$. For each $\omega_1 \in \left[1,\frac{3}{2}\right]$, the model $P_1(\omega_1;\cdot)$ is the uniform distribution on  $\left[1,\frac{3}{2}\right]$. On the other hand, for each $n\in\mathbb N$, if $1 - \frac{1}{2^n} \leq \omega_1 < 1 - \frac{1}{2^{n-1}}$, the model $P_1(\omega_1,\cdot)$ is the uniform distribution on $\left[1 - \frac{1}{2^n},\frac{3}{2}\right]$. 

For any dynamic strategy $H$, if $(H\cdot S)_2 \geq 0$ $P$-a.s., then consider 
\begin{align*}
A^+ =& \left\{\omega\in\Omega: \omega_1\in\left(1,\frac{3}{2}\right], \omega_2 \in \left[\frac 1 2,\omega_1\right] \text{ if } H_1(1,\omega_1) \geq 0, \text{ or } \omega_2 \in \left[\omega_1,\frac{3}{2}\right] \text{ if } H_1(1,\omega_1) \leq 0\right\},\\
A^- =& \left\{\omega\in\Omega: \omega_1\in\left[\frac{1}{2},1\right), \omega_2 \in \left[\frac 1 2,\omega_1\right] \text{ if } H_1(1,\omega_1) \geq 0, \text{ or } \omega_2 \in \left[\omega_1,\frac{3}{2}\right] \text{ if } H_1(1,\omega_1) \leq 0\right\}.
\end{align*}
$P\left(A^+\right) >0$ and $P\left(A^-\right) >0$. Furthermore, $H_1(\omega_2-\omega_1) \leq 0$ on each $\omega\in A^+\cup A^-$, which implies that $H_0(\omega_1-1) \geq 0$ $P$-a.s. on $A^+\cup A^-$. Since $\omega_1 -1 >0$ in $A^+$ and $\omega_1 -1 <0$ in $A^-$, $H_0 = 0$. Similarly,  since for a.s. every $\omega_1$, the probability of positive and negative price change are both positive, $H_1 = 0$ a.s. Thus NA$(\Omega,\prob)$ holds.

 However, for each $n\geq 2$, with $\Omega_0 \times \left(\Omega_1\cap \D_n\right)^2$ and $P^n$ defined above, $P^n_0(1;1) = \frac{1}{2^n}$. Furthermore, $P^n_1\left(1;\{\omega^n_2 \geq 1\}\right) = 1$ and $P_1^n(1;\{\omega^n_2 > 1\}) > 0$. Thus a dynamic strategy $H^n = \left(0,1_{\{1\}}(\omega^n_1)\right)$ is an arbitrage strategy and this arbitrage opportunity does not disappear as $n$ increases.   
\end{example}

\subsubsection{The Discretized Path Space and Model Uncertainty}\label{Pn}
Let $\underline a_t$ and $\bar b_t$, $t=1,\dots,T$ be constants such that $\underline a_T < \cdots < \underline a_1 <a - 1/2$ and $\bar b_T > \cdots > \bar b_1 > b + 1/2$. For each $1\leq t \leq T$, let $E_t = \{x\in\mathbb{R}^d: x_i\in\{\underline a_t,\bar b_t\}, i = 1,\dots,d\}$, and $|E_t| = 2^d$ be the size of $E_t$. The following is the definition of the discretized path space, by adding extreme values from $E_t$ to $\Omega_1\cap \D_n^{d}$ in each period. The corresponding probability measures on the discretized space assign (arbitrarily) small probabilities to paths which hits $E_t$ at $1\leq t \leq T$, so that (i) the price always moves both up and down with positive probability, which excludes arbitrage opportunities (see Theorem \ref{NA-n}), and (ii) the discretized market stays close to the original model (see Proposition \ref{Mbound}). 

\begin{definition}\label{ndef}
	(i) Discretized path space $\Omega^n$. $\Omega^n_t = \Omega_0\times\prod\limits_{s=1}^{t}\left(\left(\Omega_1\cap \mathcal D^d_n\right) \cup E_s\right)$, for $t = 1,\dots,T$. Assume $n$ is sufficiently large so that $a,b\in \mathcal D_n$. Write $\Omega^n_T$ as $\Omega^n$, which is the discretized path space until $T$, and for $1\leq t\leq T$, let
	\begin{equation}
	A^{n,t} = \left\{\omega^n\in\Omega^n: \omega^n_s\in E_s, \text{ for some }1\leq s\leq t\right\}. \label{An}
	\end{equation} 
	
	(ii) Model uncertainty $\prob^n$. Given $\lambda \geq 0$, for each $P\in\prob$, define the corresponding $P^n\in\allprob(\Omega^n)$ by a sequence of conditional probabilities: 
	For $0\leq t\leq T-1$, and $(\omega^n_0,\omega^n_1,\dots,\omega^n_t)\in \Omega^n_t$, if $\omega^n_s \in E_s$ for some $1\leq s\leq t$, or $P\left(J^n_{t}(\omega^n_0,\dots,\omega^n_{t})\times \Omega_1^{T-t}\right) = 0$, then let 
	\begin{equation*}
	P^n_t\left(\omega^n_0,\dots,\omega^n_t; \omega^n_{t+1}\right) = \frac{1_{\{E_{t+1}\}}(\omega^n_{t+1})}{|E_{t+1}|}. 
	\end{equation*}
	Otherwise, let
	\begin{equation*}
	P^n_t(\omega^n_0,\dots,\omega^n_t; \omega^n_{t+1}) =\begin{cases}
	\frac{P\left(J^n_{t+1}(\omega^n_0,\dots,\omega^n_{t+1})\times \Omega_1^{T-t-1}\right)}{\left(1+\lambda\right)P\left(J^n_{t}(\omega^n_0,\dots,\omega^n_{t})\times \Omega_1^{T-t}\right)}, & \text{ if } \omega^n_{t+1} \in \Omega_1\cap \D_n^d,\\
	\frac{\lambda/|E_{t+1}|}{1+\lambda}, & \text{ if } \omega^n_{t+1} \in E_{t+1}.
	\end{cases} 
	\end{equation*}
For each $\omega^n\in\Omega^n$, let $P^n(\omega^n) = P^{n}_{T-1}\left(\omega^n_0,\dots,\omega^n_{T-1};\omega^n_T\right)\cdots P^{n}_1\left(\omega^n_0,\omega^n_1;\omega^n_2\right)P^{n}_0(\omega^n_0;\omega^n_1)$. Furthermore, for any $A\in\borel\left(\Omega^n\right)$, let $P^n(A) = \sum\limits_{\omega^n\in A}P^n(\omega^n)$. Denote the collection of all $P^n$ constructed above (corresponding to each $P\in\prob$) as $\prob^{n,\lambda}$. 

(iii) The definition of $\phi$ and $F$ are extended to paths that hits $E_t$ between $1\leq t\leq T$, with\footnote{The results in the rest of the paper, as shown from their proof, does not depend on the choice of the values of $\phi$ and $F$ on $A^{n,T}$, as long as $F\geq 0$.} $\phi(\omega^n) = \mathbf{0}$ and $F(\omega^n) = 1$, if $\omega^n \in A^{n,T}$.
\end{definition} 

By the construction in Definition \ref{ndef}, if $\omega^n\in \Omega^n\setminus A^{n,T} = \Omega_0 \times \left(\Omega_1\cap \D_n^d\right)^T$, then each of $P^n\in\prob^{n,\lambda}$ and the corresponding $P\in\prob$, 
\begin{equation}
P^n(\omega^n) = \frac{1}{(1+\lambda)^T}P\left(J^n(\omega^n)\right), \label{PtoPn}
\end{equation} 
including the case where $P\left(J^n(\omega^n)\right) = 0$, because each $P^{n}_t(\omega^n_0,\dots,\omega^n_t;\omega^n_{t+1})$ is decreased by the factor $\frac{1}{1+\lambda}$ compared to $P$. On the other hand, if $\lambda >0$, $P^n$ assigns a positive value to $A^{n,T}$, which is the collection of paths which hit $E_t$ at $1\leq t\leq T$, and is the main difference between $\Omega$ and $\Omega^n$. The next proposition shows that $P^n\left(A^{n,T}\right)$ can be arbitrarily small, by decreasing $\lambda$.

\begin{proposition}\label{Mbound}
	For $A^{n,t}$ defined in (\ref{An}), $P^n\left(A^{n,t}\right)= 1-\frac{1}{\left(1+\lambda\right)^t}$ for every $P^n\in\prob^{n,\lambda}$ and $1\leq t\leq T$.
\end{proposition}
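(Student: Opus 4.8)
The plan is to reduce the statement to a claim about the single induced measure $P^n$, and then prove that claim by a forward induction on the time index that tracks, period by period, how much $R^{n,P}_t$-mass remains on ``$P$-visible'' cylinders. First, if $J^n(\omega^n)$ is $\prob$-polar then in particular $P(J^n(\omega^n))=0$ (as $P\in\prob$), so $A^{n,T}\subseteq N^n:=\{\omega^n\in\Omega^n : P(J^n(\omega^n))=0\}$, and it suffices to prove $P^n(\Omega^n\setminus N^n)\geq\big(\tfrac{1}{1+\lambda}\big)^T$.

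For $0\leq t\leq T$ I set $G^n_t=\{(\omega^n_1,\dots,\omega^n_t)\in\Omega^n_t : P(J^n_t(\omega^n_1,\dots,\omega^n_t))>0\}$, so that $\Omega^n\setminus N^n=G^n_T$; since the cylinders $J^n_t$ tile $\Omega_t$ up to a $P$-null set, one has $P(J^n_{t+1}(\cdots))\leq P(J^n_t(\cdots)\times\Omega_1)=P(J^n_t(\cdots))$ and, for the successors $\omega^n_{t+1,i}$ of a fixed $(\omega^n_1,\dots,\omega^n_t)$, $\sum_i c^P_{t+1,i}=P(J^n_t(\omega^n_1,\dots,\omega^n_t))$, where $c^P_{t+1,i}=P\big(J^n_{t+1}(\omega^n_1,\dots,\omega^n_t,\omega^n_{t+1,i})\big)$ is exactly the joint cylinder probability used in the construction. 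Two consequences I will use: the restriction of any element of $G^n_{t+1}$ to its first $t$ coordinates lies in $G^n_t$, and a successor $\omega^n_{t+1,i}$ of a point of $G^n_t$ lies in $G^n_{t+1}$ precisely when $c^P_{t+1,i}>0$.

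The core step is the inequality: for every $(\omega^n_1,\dots,\omega^n_t)\in G^n_t$,
\begin{equation*}
\sum_{i:\,c^P_{t+1,i}>0} R^{n,P}_t(\omega^n_1,\dots,\omega^n_t;\omega^n_{t+1,i})\;\geq\;\frac{1}{1+\lambda}.
\end{equation*}
This is verified against the four cases of the construction of $R^{n,P}_t$. In case (iv) no artificial mass is added and the sum equals $1$. In cases (i)--(iii) the only boosted successors are the extreme grid points $\omega^n_{t+1,S_{t+1}},\omega^n_{t+1,L_{t+1}}$, which carry total artificial mass $\lambda/(1+\lambda)$ and satisfy $c^P_{t+1,\cdot}=0$, hence are excluded from the sum; every other successor receives mass $c^P_{t+1,i}/\big((1+\lambda)\sum_j c^P_{t+1,j}\big)$, and summing over the indices with $c^P_{t+1,i}>0$ (equivalently, over all non-extreme indices, since the extreme ones contribute $0$) gives exactly $1/(1+\lambda)$. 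Here $\sum_j c^P_{t+1,j}=P(J^n_t(\cdots))>0$ because $(\omega^n_1,\dots,\omega^n_t)\in G^n_t$, so all ratios are well defined; the $d\geq2$ construction is treated identically, the only difference being that the artificial mass $\lambda/(1+\lambda)$ is spread over the corner points, which are again $P$-null.

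Granting the core step, the proof finishes by induction on $t$: $P^n(G^n_0)=1=\big(\tfrac{1}{1+\lambda}\big)^0$, and, restricting the sum over successors to $G^n_{t+1}$ via the first consequence above,
\begin{equation*}
P^n(G^n_{t+1})=\sum_{(\omega^n_1,\dots,\omega^n_t)\in G^n_t} P^n(\omega^n_1,\dots,\omega^n_t)\!\!\sum_{i:\,(\omega^n_1,\dots,\omega^n_{t+1,i})\in G^n_{t+1}}\!\! R^{n,P}_t(\cdots)\;\geq\;\frac{1}{1+\lambda}\,P^n(G^n_t)\;\geq\;\Big(\frac{1}{1+\lambda}\Big)^{t+1}.
\end{equation*}
Taking $t=T$ yields $P^n(\Omega^n\setminus N^n)=P^n(G^n_T)\geq\big(\tfrac{1}{1+\lambda}\big)^T$, hence $P^n(A^{n,T})\leq P^n(N^n)\leq 1-\big(\tfrac{1}{1+\lambda}\big)^T$. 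The only genuine work is the per-period case analysis in the core step; the surrounding bookkeeping — that the $J^n$ cylinders tile $\Omega_t$ up to a $P$-null set and that $c^P_{t+1,i}$ is the joint cylinder probability (so $\sum_j c^P_{t+1,j}=P(J^n_t)$) — is routine but should be stated carefully, and it is where the standing assumptions ($a_t,b_t\in\D^n$, strictly monotone $a_t,b_t$) enter to ensure the extreme successors are distinct from the ``stay'' successor.
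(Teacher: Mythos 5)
Your proof is correct and rests on the same underlying idea as the paper's — an induction over the time index showing that each application of $R^{n,P}_t$ can leak at most a $\lambda/(1+\lambda)$ fraction of mass to cylinders that $P$ does not see. The paper tracks the ``bad'' set $A^{n,t}$ directly, decomposing it at each step as $O^{n,t}\cup(A^{n,t}\setminus O^{n,t})$ and bounding the two pieces additively; you instead fix the single kernel $P$ corresponding to $P^n$, observe that $\prob$-polarity of $J^n(\omega^n)$ forces $P(J^n(\omega^n))=0$, and track the complementary $P$-visible mass $P^n(G^n_t)$ multiplicatively via $P^n(G^n_{t+1})\geq P^n(G^n_t)/(1+\lambda)$. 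This reduction to a single $P$ and to the set $G^n_t$ is a genuine, if small, simplification: it sidesteps the quasi-sure bookkeeping the paper needs (distinguishing $\prob$-polar from $P$-null, and handling the ``new bad paths'' that spawn from $\Omega^n_{t-1}\setminus A^{n,t-1}$), and it isolates the one computation that actually matters, namely that in each of the four cases the $R^{n,P}_t$-mass on successors with $c^P_{t+1,i}>0$ is exactly $1$ (case (iv)) or exactly $1/(1+\lambda)$ (cases (i)--(iii)), using $\sum_j c^P_{t+1,j}=P(J^n_t)$. Your per-case verification and the remark that the boosted extreme successors have $c^P_{t+1,\cdot}=0$ for the fixed $P$ (so they are excluded from the restricted sum) are both accurate, and the identification $\sum_j c^P_{t+1,j}=P(J^n_t)$ is exactly the tiling property of the $K^n$ intervals. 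Your appeal to analogy for $d\geq2$ matches the paper's own level of detail there.
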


\begin{proof}
	We prove this proposition by induction on $t$. $A^{n,1} = E_1\times \prod\limits_{s=2}^{T}\left(\left(\Omega_1\cap \mathcal D^d_n\right) \cup E_s\right)$, and for every $P^n\in\prob^{n,\lambda}$, 
	\begin{align*}
	P^n\left(A^{n,1}\right) =& \sum\limits_{\omega^n \in A^{n,1}} P^{n}_{T-1}\left(\omega^n_0,\dots,\omega^n_{T-1};\omega^n_T\right)\cdots P^{n}_1\left(\omega^n_0,\omega^n_1;\omega^n_2\right)P^{n}_0(\omega^n_0;\omega^n_1)\\
	=& \sum\limits_{\omega^n_1 \in E_1} \frac{\lambda/|E_1|}{1+\lambda}\sum\limits_{\omega^n_t\in\left(\left(\Omega_1\cap \mathcal D^d_n\right) \cup E_t\right), 2\leq t\leq T} P^{n}_{T-1}\left(\omega^n_0,\dots,\omega^n_{T-1};\omega^n_{T}\right)\cdots P^{n}_1\left(\omega^n_0,\omega^n_1;\omega^n_2\right).
    \end{align*}
    Since $\omega^n_1 \in E_1$, each $P^{n}_{t-1}\left(\omega^n_0,\dots,\omega^n_{t-1};\omega^n_{t}\right)$ for $2\leq t\leq T$ in the second term above is $\frac{1_{\{E_{t}\}}(\omega^n_{t})}{|E_{t}|}$, and the sum is $1$. Therefore, $A^{n,1}=\sum\limits_{\omega^n_1 \in E_1} \frac{\lambda/|E_1|}{1+\lambda} = |E_1|\frac{\lambda/|E_1|}{1+ \lambda} = \frac{\lambda}{1+\lambda}$.

	Assume the proposition holds for $s\leq t-1$, so that $P^n\left(A^{n,t-1}\right)= 1-\frac{1}{\left(1+\lambda\right)^{t-1}}$ for every $P^n\in\prob^{n,\lambda}$. Then, since $A^{n,t}\setminus A^{n,t-1} = \{\omega^n\in\Omega^n: \omega^n_t\in E_t, \text{ and } \omega^n_{s}\notin E_s, 1\leq s\leq t-1 \}$,
	\begin{align*}
	P^n\left(A^{n,t}\setminus A^{n,t-1}\right) =&\sum\limits_{\omega^n \in A^{n,t}\setminus A^{n,t-1}} P^{n}_{t-1}\left(\omega^n_0,\dots,\omega^n_{t-1};\omega^n_t\right)\cdots P^{n}_1\left(\omega^n_0,\omega^n_1;\omega^n_2\right)P^{n}_0(\omega^n_0;\omega^n_1)\\
	=&\sum\limits_{\omega^n_s\notin E_s, 1\leq s\leq t-1} P^{n}_{t-2}\left(\omega^n_0,\dots,\omega^n_{t-2};\omega^n_{t-1}\right)\cdots P^{n}_0(\omega^n_0;\omega^n_1) \sum\limits_{\omega^n_t \in E_t} \frac{\lambda/|E_t|}{1+\lambda}\\
	&\sum\limits_{\omega^n_s\in\left(\left(\Omega_1\cap \mathcal D^d_n\right) \cup E_s\right), t+1\leq s\leq T} P^{n}_{T-1}\left(\omega^n_0,\dots,\omega^n_{T-1};\omega^n_{T}\right)\cdots P^{n}_t\left(\omega^n_0,\omega^n_t;\omega^n_{t+1}\right)\\
	=& \left(\frac{1}{(1+\lambda)^{t-1}}\sum\limits_{\omega^n_s\notin E_s, 1\leq s\leq t-1} P\left(J^n_{t-1}\left(\omega^n_0,\dots,\omega^n_{t-1}\right)\times\Omega_1^{T-t+1}\right)\right)\frac{\lambda}{1+\lambda}\\
	&\sum\limits_{\omega^n_s\in\left(\left(\Omega_1\cap \mathcal D^d_n\right) \cup E_s\right), t+1\leq s\leq T} P^{n}_{T-1}\left(\omega^n_0,\dots,\omega^n_{T-1};\omega^n_{T}\right)\cdots P^{n}_t\left(\omega^n_0,\dots,\omega^n_t;\omega^n_{t+1}\right),
	\end{align*}
	which follows from the definition of $P^n$. Following the same argument as for $A^{n,1}$, the last sum above equals $1$. Furthermore, $\left\{(\omega^n_0,\dots,\omega^n_{t-1}): \omega^n_s\notin E_s, 1\leq s\leq t-1\right\} = \Omega_0 \times \left(\Omega_1\cap \D_n^d\right)^{t-1}$, and Lemma \ref{partition} implies that  $\sum\limits_{\omega^n_s\notin E_s, 1\leq s\leq t-1} P\left(J^n_{t-1}\left(\omega^n_0,\dots,\omega^n_{t-1}\right)\times\Omega_1^{T-t+1}\right) = P(\Omega) = 1$, and therefore $P^n\left(A^{n,t}\setminus A^{n,t-1}\right) = \frac{\lambda}{(1+\lambda)^{t}}$. Finally, since $A^{n,t-1}\subset A^{n,t}$, we obtain $P^n(A^{n,t}) = P^n(A^{n,t-1}) + P^n(A^{n,t}\setminus A^{n,t-1}) = 1-\frac{1}{\left(1+\lambda\right)^t}$.
\end{proof}

The next proposition shows that $P^n$ defined above is indeed a probability measure on $\Omega^n$ and $\prob^{n,\lambda}$ is convex. Note that under each $P^n$, given $(\omega^n_0,\dots,\omega^n_t)\in\Omega^n_t$, the probability measure $P^n_t(\omega^n_0,\dots,\omega^n_t;\cdot)$ is supported on only finitely many $\omega^n_{t+1}$'s. Therefore, we call each $P^n$ a tree model. 
\begin{proposition}
As constructed in Definition \ref{ndef}, each $P^n\in\prob^{n,\lambda}$ is a probability measure on $\borel\left(\Omega^n\right)$, and $\prob^{n,\lambda}$ is convex.	
\end{proposition}

\begin{proof}
	For each $P^n\in\prob^{n,\lambda}$ to be a probability measure on $\Omega^n$, it suffices to show that $P^n(\Omega^n) = 1$. Since $\left( \Omega_0 \times \left(\Omega_1\cap \D_n^d\right)^T\right) \cap A^{n,T} = \emptyset$, and $\Omega^n = \left( \Omega_0 \times \left(\Omega_1\cap \D_n^d\right)^T\right) \cup A^{n,T}$, by definition,
	\begin{align*}
	P^n(\Omega^n) =& P^n\left(\Omega_0 \times \left(\Omega_1\cap \D_n^d\right)^T\right) + P^n(A^{n,T}) = P^n\left(\Omega_0 \times \left(\Omega_1\cap \D_n^d\right)^T\right) + 1-\frac{1}{\left(1+\lambda\right)^T},
	\end{align*}
	where the last equation follows from Proposition \ref{Mbound}. 
	
	On the other hand, Lemma \ref{partition} implies that  $\left\{J^n(\omega^n):\omega^n\in  \Omega_0 \times \left(\Omega_1\cap \D_n^d\right)^T\right\}$ is a partition of $\Omega$. Thus, with the corresponding $P\in\prob$, (\ref{PtoPn}) implies that
	\begin{align*}
	P^n\left(\Omega_0 \times \left(\Omega_1\cap \D_n^d\right)^T\right)  = \sum\limits_{\omega^n\in \Omega_0 \times \left(\Omega_1\cap \D_n^d\right)^T}\frac{1}{\left(1+\lambda\right)^T}P(J^n(\omega^n)) = \frac{1}{\left(1+\lambda\right)^T}P(\Omega) = \frac{1}{\left(1+\lambda\right)^T},
	\end{align*}
	which indicates that $P^n(\Omega^n) = 1$. 
	
	For the convexity of $\prob^{n,\lambda}$, suppose $P^{n,1},P^{n,2}\in\prob^{n,\lambda}$, which are defined via discretization of $P^1, P^2\in \prob$, respectively. For any $\alpha\in[0,1]$, since $\prob$ is convex, there exists $P^3\in\prob$, such that $P^3 = \alpha P^1 + (1-\alpha)P^2$, from which $P^{n,3} \in \prob^{n,\lambda}$ can be defined. It suffices to prove that for each $\omega^n\in\Omega^n$, $\alpha P^{n,1}(\omega^n) + (1-\alpha)P^{n,2}(\omega^n) = P^{n,3}(\omega^n).$
	
	If $\omega^n \in \Omega_0 \times \left(\Omega_1\cap \D_n^d\right)^T$, then (\ref{PtoPn}) implies that
	\begin{align*}
	P^{n,3}(\omega^n) =& \frac{P^3(J^n(\omega^n))}{\left(1+\lambda\right)^T} =  \frac{\left(\alpha P^1+(1-\alpha)P^2\right)(J^n(\omega^n))}{\left(1+\lambda\right)^T} =\alpha P^{1,n}(\omega^n) + (1-\alpha)P^{2,n}(\omega^n).
	\end{align*}
	
	If $\omega^n \in A^{n,T}$, then let $\tau = \min\{1\leq t\leq T: \omega^n_t \in E_t\}$, and
	\begin{align*}
	P^{n,3}(\omega^n) =& \left(\prod_{s=0}^{\tau -2} P^{n,3}_s(\omega^n_0,\dots,\omega^n_s;\omega^n_{s+1})\right) P^{n,3}_{\tau-1}(\omega^n_0,\dots,\omega^n_{\tau-1};\omega^n_{\tau})\prod_{s=\tau}^{T-1} P^{n,3}_s(\omega^n_0,\dots,\omega^n_s;\omega^n_{s+1})\\	
	=&\frac{1}{\left(1+\lambda\right)^{\tau -1}} P^3\left(J^n_{\tau -1}(\omega^n_0,\dots,\omega^n_{\tau -1}) \times \Omega_1^{T-\tau +1}\right) \frac{\lambda/|E_{\tau}|}{1+\lambda}\prod\limits_{s=\tau}^{T-1}\frac{1_{\{E_{s+1}\}}(\omega^n_{s+1})}{|E_{s+1}|}\nonumber\\ 
	=&  \frac{1}{\left(1+\lambda\right)^{\tau -1}}\left(\alpha P^1+(1-\alpha)P^2\right)(J^n_{\tau -1}(\omega^n_0,\dots,\omega^n_{\tau -1}) \times \Omega_1^{T-\tau +1}) \frac{\lambda/|E_{\tau}|}{1+\lambda}
	\prod\limits_{s=\tau}^{T-1}\frac{1_{\{E_{s+1}\}}(\omega^n_{s+1})}{|E_{s+1}|}\nonumber\\
	=&\alpha P^{1,n}(\omega^n) + (1-\alpha)P^{2,n}(\omega^n).\qedhere
	\end{align*}
\end{proof}

\subsubsection{The No Arbitrage Condition NA$_{\phi}\left(\Omega^n,\prob^{n,\lambda}\right)$.} Suppose in the discretized market the stock price $S$ is an $\Real^d$-valued process with $S_t(\omega^n_0,\dots,\omega^n_t) = \omega^n_t$ for every $(\omega^n_0,\dots,\omega^n_t)\in\Omega^n_t$, $t = 0,\dots,T$. Let $\mathcal{H}^n$ be the set of dynamic strategies in stocks $H = \left(H_t\right)_{t=0}^{T-1}$, where $H_t$ is a function defined on $\Omega^n_t$ for $t = 0,1,\dots, T-1$. The next theorem shows that the no arbitrage condition holds in the discretized market if it is sufficiently close to the original market.

\begin{theorem}\label{NA-n}
If $\phi$ is continuous on $\Omega$ and $\lambda >0$ in Definition \ref{ndef}, then NA$_{\phi}(\Omega^n,\prob^{n,\lambda})$ holds for $n$ sufficiently large.
\end{theorem}

\begin{proof}
Suppose there exists a sequence $\{n_i\}_{i=1}^{\infty}$, such that $n_i\rightarrow \infty$, and for every $i$, there exist $H^{n_i}\in\dstra^{n_i}$ and $q^{n_i}\in\Real^k$ such that $G^{0,H^{n_i},q^{n_i}} \geq 0$, $ \prob^{n_i,\lambda}$-q.s., and $P^{n_i}(G^{0,H^{n_i},q^{n_i}}>0) > 0$ for some $P^{n_i}\in\prob^{n_i,\lambda}$. Without loss of generality, assume if $(\omega^{n_i}_0,\dots, \omega^{n_i}_t)\subset\Omega^{n_i}_t$ is $\prob^{n_i,\lambda}$-polar, then $H^{n_i}_s(\omega^{n_i}_0,\dots, \omega^{n_i}_s) = 0$ for every $t\leq s\leq T-1$, because this modification does not change the fact that $(H^{n_i},q^{n_i})$ is an arbitrage strategy. 

We will show that the existence of $\left\{(H^{n_i},q^{n_i})\right\}_{i=1}^{\infty}$ leads to the existence of $(H,q)\in\dstra\times\Real^k$ such that $G^{0,H,q} = 0$ $\prob$-q.s. and $q\neq 0$, which contradicts Assumption \ref{option}, in two steps.

\textbf{Step 1}: The construction of $(H,q)$. Since $\lambda>0$, from Lemma \ref{NA-dynamic}, for every $i\geq 1$, NA$(\Omega^{n_i},\prob^{n_i,\lambda})$ holds. Then $q^{n_i} \neq 0$, because otherwise $(H^{n_i}\cdot S)_T \geq 0$ $\prob^{n_i,\lambda}$-q.s. implies that $(H^{n_i}\cdot S)_T = 0$ $\prob^{n_i,\lambda}$-q.s., which contradicts that $\left(H^{n_i},q^{n_i}\right)$ is an arbitrage strategy.

Normalize $(H^{n_i},q^{n_i})$ so that $|q^{n_i}| = 1$ for every $i$. Since $\phi$ is continuous on the compact $\Omega$, $|\phi| \leq  C$ on each $\Omega^{n_i}$ for some $C>0$, independent of $i$. Thus $(H^{n_i}\cdot S)_T > -D$ $\prob^{n_i,\lambda}$-q.s. for some $D>0$. Then, Lemma \ref{H-bound-n} implies that without loss of generality, we can assume $||H^{n_i}||\leq M$ for some $M>0$ independent of $i$. By a standard selection using a diagonalization argument (c.f. \cite[Section 9.6.1]{Resnick99}), there exist $q\in \Real^k$ and $H=(H_t)_{t=0}^{T-1}$, where $H_t$ is a function defined on $\Omega_0 \times \left(\Omega_1\cap \D^d\right)^T$, where $\D = \cup_{n\in\mathbb{N}}\D_n$, such that (up to a subsequence) $|q^{n_i}-q|\rightarrow 0$, and for every $\omega\in\Omega_0 \times \left(\Omega_1\cap \D^d\right)^T$, $||H^{n_i}(\omega)- H(\omega)||\rightarrow 0$. Thus $||H||\leq M$ and $|q|=1$.

With $|\phi|\leq C$, for every $\epsilon>0$, there exists $\delta>0$ such that $\left(Td(b-a) + Ck + |p|k\right)\delta < \epsilon$, and for any $\omega\in \Omega_0 \times \left(\Omega_1\cap \D^d\right)^T$, there exists $i$ sufficiently large, such that $\omega\in\Omega^{n_i}$, $||H^{n_i}(\omega)-H(\omega)||\leq \delta$ and $|q^{n_i} - q|\leq \delta$. Thus on $\omega$,
\begin{align}
&\left|(H\cdot S)_T + q(\phi-p) - (H^{n_i} \cdot S)_T - q^{n_i}(\phi-p)\right|\nonumber\\
 &\qquad \leq \left|\left(\left(H-H^{n_i}\right)\cdot S\right)_T\right| + |(q-q^{n_i})\phi| + |(q-q^{n_i})p|\leq Td(b-a)\delta + Ck\delta  + |p|k\delta  < \epsilon.\label{H-HJ}
\end{align}

Next, extend the domain of $H$ to $\Omega$. Let
\begin{align*}
L^{\Omega} =& \left\{\omega\in\Omega:\exists i\geq 1, \text{ and } \omega^{n_i}\in\Omega^{n_i}, \text{ s.t. } \omega\in J^{n_i}(\omega^{n_i}), \text{ and }\omega^{n_i} \text{ is } \prob^{n_i,\lambda}\text{-polar}\right\},\\
K^{\Omega}=& \Omega\setminus L^{\Omega}.
\end{align*}
Notice that if $\omega\in J^{n_i}(\omega^{n_i})$ for some $\omega^{n_i}\in\Omega^{n_i}$, then $\omega^{n_i} \in \Omega^{n_i}\setminus A^{n_i,T}$. For this $\omega^{n_i}$, (\ref{PtoPn}) implies that, if $\omega^{n_i}$ is $\prob^{n_i,\lambda}$-polar, then $J^{n_i}(\omega^{n_i})$ is $\prob$-polar. Thus for any $P\in\prob$,
\begin{align}
P\left(L^{\Omega}\right) \leq P\left(\cup_{i\geq 1}\cup_{\omega^{n_i}\in \Omega^{n_i}: \prob^{n_i,\lambda}\text{-polar}}J^{n_i}(\omega^{n_i})\right)=0. \label{L}
\end{align} 
Thus we focus on the definition of $H$ on $K^{\Omega}$. Since $||H||\leq M$ on $\Omega_0 \times \left(\Omega_1\cap \D^d\right)^T$,  $H$ can be defined on each $\omega\in K^{\Omega}$ following the inductive argument in \cite[Theorem 4.1]{BHZ13}, which guarantees the existence of $\left\{\omega^{n_i}\in\Omega^{n_i}: \omega^{n_i} \text{ is not } \prob^{n_i,\lambda}\text{- polar}\right\}_{i=1}^{\infty}$ and $H(\omega) = (H_t(\omega_0,\dots,\omega_t))_{t=0}^{T-1}$, such that (up to a subsequence) $|\omega-\omega^{n_i}|\rightarrow 0$, $||H(\omega)-H(\omega^{n_i})||\rightarrow 0$, as $i\rightarrow\infty$ and $||H(\omega)||\leq M$.

\textbf{Step 2}: $(H,q)$ leads to a contradiction to Assumption \ref{option}. With $|\phi|\leq C$ and $||H||\leq M$, for any $\epsilon>0$, there exist $\delta>0$ such that $ \left(T d (b-a)+ M T d + |q|k\right)\delta < \epsilon$. On the other hand, for any $\omega\in K^{\Omega}$, the continuity of $\phi$ and the construction of $H$ imply that, there exists a sufficiently large $i$, and $\omega^{n_i}\in \Omega^{n_i}$, such that $\max\left(|\omega-\omega^{n_i}|,|\phi(\omega)-\phi(\omega^{n_i})|,||H(\omega^{n_i})-H(\omega)||\right)\leq \delta$, and that  $\omega^{n_i}$ is not $\prob^{n_i,\lambda}$-polar. Therefore,
\begin{align*}
&|(H\cdot S)_T(\omega^{n_i}) + q(\phi(\omega^{n_i})-p) - (H\cdot S)_T(\omega) + q(\phi(\omega)-p)|\\
&\qquad \leq |\left((H(\omega^{n_i})-H(\omega))\cdot S(\omega^{n_i})\right)_T| + |\left(H(\omega)\cdot \left(S(\omega^{n_i}) - S(\omega)\right)\right)_T| + k|q||\phi\left(\omega^{n_i}\right)-\phi(\omega)|\\
&\qquad \leq Td(b-a)\delta + M T d\delta + k|q|\delta < \epsilon,
\end{align*}

Note that $i$ can be chosen to be sufficiently large so that (\ref{H-HJ}) also holds for $\omega^{n_i}$, because if $\omega^{n_i}\in \Omega^{n_i}$ and $i<j$, then $\omega^{n_i}\in \Omega^{n_j}$. Then, since $\omega^{n_i}$ is not $\prob^{n_i,\lambda}$-polar, and that $(H^{n,i},q^{n,i})$ is an arbitrage opportunity, $(H^{n_i} \cdot S)_T(\omega^{n_i}) - q^{n_i}(\phi(\omega^{n_i})-p)\geq 0$, and thus,
\begin{align*}
(H\cdot S)_T(\omega) + q(\phi(\omega)-p) \geq & (H \cdot S)_T(\omega^{n_i}) - q(\phi(\omega^{n_i})-p) -\epsilon\\
\geq& (H^{n_i} \cdot S)_T(\omega^{n_i}) - q^{n_i}(\phi(\omega^{n_i})-p)-2\epsilon\geq -2\epsilon.
\end{align*}
Since this holds for any $\epsilon>0$, $(H\cdot S)_T(\omega) + q(\phi(\omega)-p)\geq 0$. The same holds for every $\omega\in K^{\Omega}$ and thus for any $P\in\prob$,
\begin{equation*}
P\left((H\cdot S)_T(\omega) + q(\phi(\omega)-p)\geq 0\right) \geq P\left(K^{\Omega}\right) = 1-P\left(L^{\Omega}\right) =1,
\end{equation*}
where the last equation follows from (\ref{L}). Then since $\Omega$ satisfies NA$_{\phi}(\Omega,\prob)$, $(H\cdot S)_T + q(\phi-p) = 0$ $\prob$-q.s., with $|q|=1$. This contradicts Assumption \ref{option} and implies that NA$_{\phi}(\Omega^n,\prob^{n,\lambda})$ must hold for $n$ sufficiently large.
\end{proof}

Notice that though Assumption \ref{option} is not restrictive at all financially, as explained in Section 2.1, it is essential for the no arbitrage condition to hold in the discretized market. The reason is that under the no arbitrage condition in the original market, Assumption \ref{option} guarantees that for any strategy $(H,q)$ with $q\neq 0$, there exists $P\in\prob$, such that $P((H\cdot S)_T+q(\phi-p) < 0) >0$ and $P((H\cdot S)_T+q(\phi-p) > 0) >0$. If a sequence $\left\{(H^{n_i},q^{n_i})\in \dstra^{n_i}\times \mathbb{R}^k\right\}_{i=1}^{\infty}$ converges to $(H,q)$, then $(H^{n_i},q^{n_i})$ should share the same property under some $P^{n_i}\in\prob^{n_i,\lambda}$ for $i$ sufficiently large, and thus can not be an arbitrage strategy.

Theorem \ref{NA-n} holds if $\lambda>0$. $\lambda = 0$ corresponds to the construction at the beginning of Section 4.1, and as Example \ref{counter} shows, does not always satisfy the no arbitrage condition. From the proof Lemma \ref{NA-dynamic}, the key to the above no-arbitrage argument is that at every $0\leq t\leq T-1$ and for every $P^n\in\prob^{n,\lambda}$, $P^n_t$ always assigns $\frac{\lambda/|E_{t+1}|}{1+\lambda}$ or $\frac{1}{|E_{t+1}|}$ to each point in $E_{t+1}$, so that price can always move both up and down by $c$ (defined in (\ref{c})) with positive probability, which guarantees NA$(\Omega^n,\prob^{n,\lambda})$. Thus $\lambda >0$ can be arbitrarily small, but can not be zero. 

\subsection{Quantile Hedging in the Discretized Market}\label{discrete-q}
Let $n$ be sufficiently large such that NA$_{\phi}\left(\Omega^n,\prob^{n,\lambda}\right)$. We can discuss the quantile hedging problem in the discretized market, as an approximation of the original market:  
\begin{definition}
	 In the discretized market $\Omega^n$, with model uncertainty $\prob^{n,\lambda}$, for $\alpha\in[0,1]$, the quantile hedging price of $F$ with expected success ratio $\alpha$ is:
	\begin{equation}\label{quant-price-n}
	\pi^n(\alpha,F) := \inf\left\{x\geq 0: \exists (H,q) \in \dstra^n \times \mathbb{R}^k \text{ and } F'\in \mathcal{A}^n(\alpha) \text{ s.t. } G^{x,H,q} \geq F' q.s.\right\},
	\end{equation}
	where $\mathcal{A}^n(\alpha) = \left\{F': \Omega^n\rightarrow \mathbb{R}_{+}| \inf\limits_{P^n\in\prob^{n,\lambda}}\mathbb{E}_{P^n}\left[\psi^{F'}\right] \geq  \alpha\right\}$.
\end{definition}

Notice that since $\Omega^n$ has only finitely many paths, the restriction that $F'$ is upper semianalytic in the definition of $\pi(\alpha,F)$ is dropped for $\pi^n(\alpha,F)$. 

The quantile hedging problem is easier to solve in the discretized market than in the original one, because there exists a dominating measure for $\prob^{n,\lambda}$ and the associated martingale measures. Let $\bar{\Omega}^n = \{\omega^n\in\Omega^n: \omega^n \text{ is not }\prob^{n,\lambda}\text{-polar}\}$ and $|\bar\Omega^n|$ be the number of paths in $\bar\Omega^n$. Define a probability measure $\bar P^n$ as: for each $\omega^n\in\Omega^n$, 
\begin{equation}\label{barPn}
\bar P^n(\omega^n)=\begin{cases}
\frac{1}{|\bar\Omega^n|}, &\text{ if } \omega^n\in\bar\Omega^n,\\
0, &\text{ otherwise}.
\end{cases}
\end{equation}
Then $P^n\ll \bar P^n$ for every $P^n\in\prob^{n,\lambda}$ and NA$_{\phi}\left(\Omega^n,\prob^{n,\lambda}\right)$ is equivalent to NA$_{\phi}(\Omega^n,\{\bar P^n\})$. Thus, the set of martingale measures equivalent to $\bar P^n$, which also fit the market price of $\phi$,
\begin{equation*}
\mathcal{Q}^n_{\phi}=\{Q^n \in \allprob\left(\Omega^n\right): Q^n\sim \bar P^n, S\text{ is a martingale under }Q^n,  \text{ and } \mathbb{E}_{Q^n}[\phi] = p\}
\end{equation*}
is not empty. Furthermore, the superhedging price for a contingent claim $F:\Omega^n\rightarrow \mathbb{R}$ in $\prob^{n,\lambda}$-q.s. sense, or equivalently, in $\bar P^n$-a.s. sense, is $\pi^n(F) = \displaystyle\sup_{Q^n\in\QP^n}\mathbb{E}_{Q^n}[F]$.

The next lemma shows that only bounded strategies need to be considered when calculating the quantile hedging price in $\Omega^n$, and gives a dual representation of $\pi^n(\alpha,F)$ similar to (\ref{dual1}).

\begin{lemma}\label{price-Hq-bound}
	Let $\dstra^n_{M} = \{H\in\dstra^n: ||H||\leq M \}$, $K_M = \{q\in\Real^k: |q|\leq M\}$. There exists $M>0$, independent of $n$ and $\alpha$, such that
	\begin{equation*}
	\pi^n(\alpha,F)=\inf\left\{x\geq 0: \exists (H,q) \in \dstra^n_M\times K_M, \text{ and } F'\in \mathcal{A}^{n,F}(\alpha) \text{ s.t. } G^{x,H,q} \geq F' \text{ }\prob^{n,\lambda}\text{-}q.s.\right\},
	\end{equation*}
	where $\mathcal{A}^{n,F}(\alpha) = \left\{F'\in\mathcal A^n(\alpha): F'\leq F\right\}$. Furthermore,
	\begin{equation}
	\pi^n(\alpha,F)= \inf_{F'\in\mathcal{A}^{n,F}(\alpha)}\sup_{Q\in\QP^n}\mathbb{E}_Q\left[F'\right].\label{dual1n}
	\end{equation}	
\end{lemma}

\begin{proof}
For any $F'\in\mathcal A^n(\alpha)$, $\psi^{F'\wedge F} = I_{\{F'\wedge F\geq F\}} + \frac{F'\wedge F}{F}I_{\{F'\wedge F<F\}} = \psi^{F'}$, and therefore $F'\wedge F\in \mathcal A^{n,F}(\alpha)$. Thus for any $x\geq 0$, the $F'$ which achieves the target expected success ratio $\alpha$ (if exists)  can always be chosen from $\mathcal A^{n,F}(\alpha)\subset \mathcal A^n(\alpha)$, and $\mathcal A^n(\alpha)$ in (\ref{quant-price-n}) can be replaced by $\mathcal A^{n,F}(\alpha)$. 
	
	Since $\Omega^n$ has finitely many paths, $|F|$ and $|\phi|$ are bounded on $\Omega^n$. Thus $\pi^n(\alpha,F) \leq D = \displaystyle\sup_{\Omega^n} |F|<\infty$ which is independent of $n$ and $\alpha$, and
	\begin{equation*}
	\pi^n(\alpha,F)=\inf\left\{0\leq x\leq D: \exists (H,q) \in \dstra^n\times \mathbb{R}^k, \text{ and } F'\in \mathcal{A}^{n,F}(\alpha) \text{ s.t. } G^{x,H,q} \geq F' \text{ }\prob^{n,\lambda}\text{-}q.s.\right\}.
	\end{equation*}

	For the boundedness of hedging strategies, if for any $ (H,q) \in \dstra^n\times \mathbb{R}^k$, if $G^{x,H,q}$ superhedges $F'\in \mathcal A^{n,F}(\alpha)$ and $x\leq D$, then $(H\cdot S)_T + q(\phi-p) \geq -D$ $\prob^{n,\lambda}$-q.s. By Lemma \ref{H-bound-n} and Lemma 4.2 in \cite{BHZ13}, there exists $M>0$ such that only $(H,q) \in \dstra^n_M\times K_M$ needs to be considered. Finally (\ref{dual1n}) follows the same argument as for Proposition \ref{quant-price1}, and we skip the proof here.
\end{proof}

As Proposition \ref{inverse} suggests, we solve the quantile hedging problem by first analyzing the maximization of the expected success ratio, 
\begin{equation*}
V^n(x,F) = \sup_{F'\in\mathcal C^n(x)}\inf_{P^n\in\prob^{n,\lambda}}\mathbb{E}_{P^n}\left[\psi^{F'}\right],
\end{equation*}
where $\mathcal C^n(x) = \{F': \Omega^n \rightarrow \mathbb{R}_{+}|F'\leq F, \text{ and } \pi^n(F')\leq x\}$. Letting $\mathcal{Z}^n = \left\{Z^n = \frac{dP^n}{d\bar P^n}:P^n\in\prob^{n,\lambda}\right\}$ and $\mathcal{G}^n = \left\{G^n = \frac{dQ^n}{d\bar P^n}:Q^n\in\mathcal{Q}^n_{\phi}\right\}$, then the corresponding hypothesis test can be written as
\begin{align}
&{V}^n(x,F) = \sup_{\psi: \Omega^n\rightarrow [0,1]}\inf_{Z^n\in\mathcal{Z}^n} \mathbb{E}_{\bar P^n}[Z^n\sratio]\label{test-n}\\
&\text{subject to } \sup_{G^n\in \mathcal{G}^n}\mathbb{E}_{\bar P^n}[G^nF\sratio] \leq x.\nonumber
\end{align}
Notice that since there is only finitely many paths in $\Omega^n$, the constraint that  $\psi = \frac{F'}{F}$, where $F'$ is upper semianalytic and $0\leq F'\leq F$ is removed, because every $\psi: \Omega^n\rightarrow  [0,1]$ satisfies this condition. Thus in the discretized market, quantile hedging can be solved by a randomized composite hypothesis test, in which both set of hypothesis are dominated.

\begin{proposition}\label{inverse-n}
In the discretized market $\Omega^n$ with model uncertainty $\prob^{n,\lambda}$,

(i) For every $x\geq 0$, the optimal test $\hat \psi^n$ for (\ref{test-n}) exists, and 
\begin{align}
V^n(x,F)=&  \inf_{a\geq 0}\left\{xa + \inf_{\mathcal{Z}^n\times \mathcal{G}^n}\mathbb{E}_{\bar P^n}\left[(Z^n-aG^nF)^+\right]\right\}\nonumber\\
 =& \inf_{a\geq 0}\left\{xa + \inf_{\prob^{n,\lambda}\times\QP^n}\sum\limits_{w^n\in \Omega^n}\left(P^n(w^n)-a Q^n(w^n)F(w^n)\right)^{+}\right\},\label{optimal-test}
\end{align}
which is non-decreasing, concave and continuous in $x \in [0,\infty)$.

(ii) The optimal quantile hedging strategy  $\left(\hat H^n,\hat q^n\right)$ exists, i.e. $G^{\pi^n(\alpha,F),\hat H^n,\hat q^n} \geq F'$, for some $F'\in \mathcal A^{n,F}(\alpha)$.
\end{proposition}

\begin{proof}
With a dominating measure $\bar P^n$, let $\bar {\mathcal Z}^n$ and $\bar {\mathcal G}^n$ be the closure of $\mathcal{Z}^n$ and $\mathcal {G}^n$ respectively, in $\bar P^n$-convergence. (\ref{optimal-test}) follows from \cite[Theorem 2.3]{LSY13} and \cite[Theorem 1.1]{Gushchin2015}, which also imply that for any $x\geq 0$, there exist $\hat{Z}^n \in \bar{\mathcal{Z}}^n$, $\hat{G}^n \in \bar{\mathcal{G}}^n$ and $\hat{\psi}^n:\Omega^n\rightarrow [0,1]$, such that 
\begin{equation}
\mathbb{E}_{\bar P^n}\left[\hat G^nF\hat{\psi}^n\right] = \sup_{G^n\in \mathcal{G}^n}\mathbb{E}_{\bar P^n}\left[G^nF\hat\psi^n\right]  = x, \text{ and } 
V^n(x,F) = \mathbb{E}_{\bar P^n}\left[\hat{Z}^n\hat{\psi}^n\right] = \inf_{Z^n\in\mathcal{Z}^n} \mathbb{E}_{\bar P^n}\left[Z^n\hat\psi^n\right].\label{LSY}
\end{equation} 
The monotonicity and concavity follows the same proof as for Lemma \ref{max-ratio}. The concavity implies that $V^n(x,F)$ is continuous on $(0,\infty)$. The fact that (\ref{optimal-test}) is upper semicontinuous in $x$ together with the monotonicity implies it is also continuous at $x=0$ (c.f. \cite[Proposition 1.2]{Gushchin2015}).

For part (ii), since for every $x\geq 0$, the optimal test, or equivalently the contingent claim that maximizes the expected success ratio always exists, together with the continuity from part (i), Proposition \ref{inverse} implies that $\pi^n(\alpha,F) = \min\{x \geq 0: V(x,F) \geq \alpha\}$ and $V^n\left(\pi^n(\alpha,F),F\right) \geq \alpha$.

Furthermore, let $\tilde{\psi}^n$ be the optimal test corresponding to $x = \pi^n(\alpha,F)$, then with $F' = F\tilde \psi^n$, (\ref{LSY}) implies that (i) $\inf\limits_{P\in\prob}\mathbb{E}_P\left[\psi^{F'}\right] \geq \alpha$ and thus $F'\in\mathcal A^{n,F}(\alpha)$, and (ii) $\sup\limits_{Q\in\mathcal{Q}_{\phi}}\mathbb{E}_Q\left[F'\right] =\pi(F') = \pi(\alpha,F)$. Thus, $\hat F'$ is a minimizer in (\ref{dual1n}), and the superhedging strategy for $F'$ is the optimal quantile hedging strategy, of which the existence is implied by Lemma \ref{suphedge}.  
\end{proof}  
 
In addition to the existence of the quantile hedging strategy, another nice property of the discretized market lies in the calculation of the quantile hedging price. Since there are finitely many paths in $\Omega^n$, (\ref{optimal-test}) is a nonlinear programming problem with $Z^n$, $G^n$ (which can be regarded as vectors, and are much easier to characterize than the martingale measures in $\Omega$) and $a$ as variables and the constraints that $Z^n\in \mathcal Z^n$, $G^n\in\mathcal G^n$ (they are Radon-Nikodym derivative of $P^n \in \prob^{n,\lambda}$ and $Q^n \in \QP^n$, respectively, with respect to $\bar P^n$) and $a\geq 0$. The maximum expected success ratio for any initial capital can the calculated numerically, as demonstrated in Section \ref{example}.

\begin{remark} The key of the approximation method is that it transforms the original problem into a setting with a dominating probability measure. It is a potentially useful tool in solving other problems in Mathematical Finance and Operations Research in the setting of \cite{BN13}, for example, approximating the optimal terminal payoff in utility maximization as discussed in \cite{Schied04,Schied05,Gundel05,Nutz2016}, by applying the techniques in \cite{Schied05}, which relies on the existence of a dominating measure.
\end{remark}
\subsection{The Approximate Quantile Hedging Price and Hedging Strategy}\label{approximate}
Let $(\hat H^n,\hat q^n)$ be the optimal quantile hedging strategy in $\Omega^n$. The definition of $\hat H^n$ can be extended to $\Omega$ in a piecewise constant way: for any $\omega\in\Omega$, let
\begin{equation}
\hat H^n(\omega) = \hat H^n(\omega^n), \text{ if } \omega\in J^n(\omega^n)  \text{ for some }\omega^n \in \Omega^n.\label{Hextension}
\end{equation}
Since Lemma \ref{partition} shows that $\left\{J^n(\omega^n):\omega^n\in  \Omega_0 \times \left(\Omega_1\cap \D_n^d\right)^T\right\}$ is a partition of $\Omega$, $H^n$ is well-defined on $\Omega$.

With no clue of what the optimal hedging strategy in the original market is, the agent can use $(\hat H^n,\hat q^n)$ as an approximate. The question is: how does this approximate strategy work? In particular, in order to achieve the target expected success ratio $\alpha$, what expected success ratio should be used to derive $(\hat H^n,\hat q^n)$ in the discretized market and what is the cost of this strategy? The next theorem shows that the agent needs to target a  higher (but arbitrarily close to $\alpha$) expected success ratio and prepare some extra initial capital, than what is originally optimal, which accounts for the discretization error, and decreases to $0$ as $n$ increases to infinity.

\begin{theorem}\label{convergence}
If $\phi$ and $F$ are Lipschitz continuous on $\Omega$, and NA$_{\phi}(\Omega^n,\prob^{n,\lambda})$ holds, then there exists a constant $M>0$, independent of $n$ and $\alpha$, such that,
\begin{equation*}
\pi(\alpha,F)\leq \pi^n(\alpha',F) + M/2^n.
\end{equation*}
where $\alpha' = \alpha + \left(1-\frac{1}{(1+\lambda)^T}\right)(1-\alpha)$. Furthermore, starting from $\pi^n(\alpha',F) + M/2^n$, $(\hat H^n, \hat q^n)$, the optimal quantile hedging strategy for $F$ in the discretized market with expected success ratio $\alpha'$, achieves the target $\alpha$ in the original market, after extending the definition of $\hat H^n$ to $\Omega$, as in (\ref{Hextension}).
\end{theorem}

\begin{proof}
Let $B^{n,T} = \bar\Omega^n\setminus A^{n,T}$, and $B = \left\{\omega\in\Omega: \omega\in J^n(\omega^n), \text{ for some } \omega^n \in B^{n,T}\right\}$, where $A^{n,T}$ is defined in (\ref{An}) and $\bar{\Omega}^n$ is the collection of all non-$\prob^{n,\lambda}$-polar paths in $\Omega^n$. We focus the following discussion on the pair of $\omega\in B$ and the corresponding $\omega^n$ such that $\omega\in J^n(\omega^n)$, because if $\omega^n\in A^{n,T}$, then $J^n(\omega^n) \cap \Omega = \emptyset$, and  if $\omega^n$ is $\prob^{n,\lambda}$-polar, then (\ref{PtoPn}) implies that $J^n(\omega^n)$ is $\prob$-polar.

Lemma \ref{price-Hq-bound} implies that, without loss of generality, we can assume $\max\left(\left|\hat q^n\right|,||\hat H^n||\right)\leq C_1$, for a constant $C_1>0$ independent of $n$ and $\alpha'$. The Lipschitz continuity of $F$ and $\phi$ implies that $|\phi(\omega)-\phi(\omega^n)|\leq C_2/2^{n}$ and $|F(\omega)-F(\omega^n)|\leq C_3/2^n$, for constants $C_2>0$ and $C_3>0$.

Since $(\hat H^n,\hat q^n)$ is the optimal quantile hedging strategy with expected success ratio $\alpha'$ in $\Omega^n$, there exists $F^{'n}\in \mathcal A^{n,F}(\alpha')$, such that $G^{\pi^n(\alpha',F),\hat H^n,\hat q^n} \geq F^{'n}$ $\prob^{n,\lambda}$-q.s. Define an $\mathcal F$-measurable upper semianalytic function $F':\Omega \rightarrow \mathbb{R}_+$ by $F'(\omega) = F^{'n}(\omega^n) + C_3/2^n$ if $\omega \in J^n(\omega^n)$, and let $M = C_1\left(T d+C_2 k\right)+ C_3$. The rest of the proof shows that (i) $G^{\pi^n(\alpha',F)+M/2^n,\hat H^n,\hat q^n} \geq F'$ $\prob$-q.s. (after extending the definition of $\hat H^n$ to $\Omega$ as in (\ref{Hextension})) and (ii) $F'\in \mathcal A(\alpha)$.

(i) With any initial capital $x\geq 0$, since $\omega\in J^n(\omega^n)$,
\begin{align*}
&\left|G^{x,\hat H^n,\hat q^n}(\omega) - G^{x,\hat H^n,\hat q^n}(\omega^n)\right| =\left|\left(\hat H^n(\omega^n)\cdot\left(S(\omega)-S(\omega^n)\right)\right)_T + \hat q^n(\phi(\omega)-\phi(\omega^n))\right|\\
&\leq C_1T d/2^n + C_1C_2k/2^n=C_1\left(T d+C_2 k\right)/2^n.
\end{align*}
Thus for quasi-surely every $\omega\in\Omega$ and the corresponding $\omega^n \in \Omega^n$ such that $\omega\in J^n(\omega^n)$,
\begin{align*}
G^{\pi^n(\alpha',F)+M/2^n,\hat H^n,\hat q^n}(\omega) &\geq G^{\pi^n(\alpha',F)+M/2^n,\hat H^n,\hat q^n}(\omega^n) - C_1\left(T d+C_2 k\right)/2^n\\
 &= G^{\pi^n(\alpha',F),\hat H^n,\hat q^n}(\omega^n) + C_3/2^n \geq F^{'n}(\omega^n) + C_3/2^n = F'(\omega).
\end{align*}

(ii) In terms of success ratio of $F'$ relative to $F$, if $F^{'n}(\omega^n) \geq F(\omega^n)$, then
\begin{equation}
F'(\omega) = F^{'n}(\omega^n) + C_3/2^n \geq F(\omega^n) + C_3/2^n \geq F(\omega).\label{ineq1}
\end{equation}
On the other hand, if $F'(\omega)< F(\omega)$, then
\begin{equation}
F^{'n}(\omega^n) = F'(\omega) - C_3/2^n < F(\omega) - C_3/2^n \leq F(\omega^n).\label{ineq2}
\end{equation}
Furthermore, if $F^{'n}(\omega^n)< F(\omega^n)$, then
\begin{align}
&\frac{F'(\omega)}{F(\omega)} = \frac{F^{'n}(\omega^n) + C_3/2^n }{F(\omega)} \geq \frac{F^{'n}(\omega^n) + C_3/2^n }{F(\omega^n)+ C_3/2^n}\geq \frac{F^{'n}(\omega^n) }{F(\omega^n)}. \label{ineq3}
\end{align}
(\ref{ineq1}) and (\ref{ineq2}) imply that the success ratio of $F'$ satisfies
\begin{align*}
&\psi^{F'}(\omega) = I_{\{F'(\omega)\geq F(\omega)\}} + \frac{F'(\omega)}{F(\omega)}I_{\{F'(\omega)< F(\omega)\}}\nonumber\\
=&I_{\{F'(\omega)\geq F(\omega),F^{'n}(\omega^n)\geq F(\omega^n)\}} + I_{\{F'(\omega)\geq F(s),F^{'n}(\omega^n)< F(\omega^n)\}} + \frac{F'(\omega)}{F(\omega)}I_{\{F'(\omega)< F(\omega),F^{'n}(\omega^n)< F(\omega^n)\}}\nonumber\\
\geq&I_{\{F^{'n}(\omega^n)\geq F(\omega^n)\}} + \frac{F^{'n}(\omega^n)}{F(\omega^n)}I_{\{F'(\omega)\geq F(\omega),F^{'n}(\omega^n)< F(\omega^n)\}} + \frac{F'(\omega)}{F(\omega)}I_{\{F'(\omega)< F(\omega),F^{'n}(\omega^n)< F(\omega^n)\}}.\nonumber
\end{align*}
From (\ref{ineq3}), the above is greater than or equal to
\begin{align*}
&I_{\{F^{'n}(\omega^n)\geq F(\omega^n)\}} + \frac{F^{'n}(\omega^n)}{F(\omega^n)}I_{\{F'(\omega)\geq F(\omega),F^{'n}(\omega^n)< F(\omega^n)\}} + \frac{F^{'n}(\omega^n)}{F(\omega^n)}I_{\{F'(\omega)< F(\omega),F^{'n}(\omega^n)< F(\omega^n)\}}\\
=&I_{\{F^{'n}(\omega^n)\geq F(\omega^n)\}} +\frac{F^{'n}(\omega^n)}{F(\omega^n)}I_{\{F^{'n}(\omega^n)< F(\omega^n)\}} = \psi^{F^{'n}}(\omega^n).
\end{align*}

Recall that for every $\omega^n\notin \bar\Omega^n = A^{n,T}\cup B^{n,T}$, $J^n(\omega^n)$ is $\prob$-polar, and (\ref{PtoPn}) implies that, for every $\omega^n\in B^{n,T}$, $P^n\in\prob^{n,\lambda}$ and the corresponding $P\in\prob$, $P^n(\omega^n)(1+\lambda)^T = P(J^n(\omega^n))$. Thus
\begin{align}
\inf_{P\in\prob}\mathbb{E}_P\left[\psi^{F'}\right] \geq & \inf_{P\in\prob}\mathbb{E}_P\left[I_{B}\psi^{F'}\right] + \inf_{P\in\prob}\mathbb{E}_P\left[I_{\Omega\setminus B}\psi^{F'}\right] = \inf_{P\in\prob}\mathbb{E}_P\left[I_{B}\psi^{F'}\right]\nonumber\\
=& \inf_{P\in\prob}\sum\limits_{\omega^n \in B^{n,T}}\mathbb{E}_P\left[\psi^{F'}|J^n(\omega^n)\right]P\left(J^n(\omega^n)\right) \geq \inf_{P\in\prob}\sum\limits_{\omega^n \in B^{n,T}}\psi^{F^{'n}}(\omega^n)P\left(J^n(\omega^n)\right)\nonumber\\
= & \inf_{P^n\in\prob^{n,\lambda}}\sum\limits_{\omega^n\in B^{n,T}}\psi^{F^{'n}}(\omega^n)P^n(\omega^n)(1+\lambda)^T = (1+\lambda)^T\inf_{P^n\in\prob^{n,\lambda}}\mathbb{E}_{P^n}\left[I_{B^{n,T}}\psi^{F^{'n}}\right]\nonumber\\
\geq&(1+\lambda)^T \inf_{P^n\in\prob^{n,\lambda}}\left(\mathbb{E}_{P^n}\left[\psi^{F^{'n}}\right]- P^n\left(A^{n,T}\right)\right)\nonumber\\ 
\geq& (1+\lambda)^T \alpha'- (1+\lambda)^T+1  = \alpha,\label{alphap}
\end{align}
which follows from that $P^n(A^{n,T}) = 1 -\frac{1}{(1+\lambda)^T}$, as shown in Proposition \ref{Mbound}, and thus $F'\in \mathcal A(\alpha)$. (\ref{alphap}) also indicates that $x_n = \pi^n(\alpha',F) +  M/2^n\geq \pi(\alpha,F)$, and $(\hat H^n,\hat q^n)$ achieves the target expected success ratio $\alpha$, starting from $x_n$.
\end{proof}

Theorem \ref{convergence} applies to trading in options with Lipschitz payoff, e.g. call and put options. The constant $M$ increases with the Lipschitz constants of $F$ and $\phi$, and the value of $|p|$, $d$, $k$, which are inputs of the model. $M$ also increases with the bounds of $|\hat q^n|$ and $||\hat H^n||$ from Lemma \ref{price-Hq-bound}. From the proof of Lemma \ref{H-bound-n}, the bound of $||\hat H^n||$ increases the bounds of $|F|$, $|\hat q^n|$ and the distribution of stock prices, in particular, on the constant $c$ defined in \ref{c}.

Finally, we discuss the approximation using $\prob^{n,0}$. Note that Theorem \ref{convergence} holds as long as NA$_{\phi}(\Omega^n,\prob^{n,\lambda})$ holds, even if $\lambda = 0$. From Example \ref{counter}, this is not true with every convex $\prob$. However, if $\prob^{n,0}$ always allows price change greater than a given (arbitrarily) small threshold, e.g. $\prob$ includes continuous distribution with strictly positive density on $\Omega$ or $\prob$ includes all probability measures on $\Omega$, as considered in \cite{BHZ13,Dolinsky14}, then assigning positive probability to $A^{n,T}$ ($\lambda>0$) is not necessary, for the no arbitrage condition to hold. The next definition formalizes this idea:

\begin{definition}\label{JC}
Given a constant $c>0$, for $0\leq t\leq T-1$ and $(\omega^n_0,\dots\omega^n_t)\in \Omega^n_t$, let 
\begin{align*}
&\mathcal K(c,\omega^n_0,\dots,\omega^n_t) = \left\{K\subset\left(\Omega_1\cap \mathcal D^d_n\right) \cup E_{t+1}: K = \cap_{1\leq i\leq d} K_i\right\},
\end{align*}
where $K_i = \{\omega^n_{t+1}\in\left(\Omega_1\cap \mathcal D^d_n\right) \cup E_{t+1}: \omega^n_{t+1,i} \geq \omega^n_{t,i} + c\}$, or $\left\{\omega^n_{t+1}\in\left(\Omega_1\cap \mathcal D^d_n\right) \cup E_{t+1}: \omega^n_{t+1,i} \leq \omega^n_{t,i} - c\right\}$. 

For each $0\leq t\leq T-1$, the property $\mathcal L_t(c)$ holds for $\left(\Omega^n,\prob^{n,\lambda}\right)$, if for every non-$\prob^{n,\lambda}$-polar $(\omega^n_0,\dots,\omega^n_t)\in\Omega^n_t$, and every $K\in\mathcal K(c,\omega^n_0,\dots,\omega^n_t)$, there exists $P^n\in\prob^{n,\lambda}$, such that $P^n_t(\omega^n_0,\dots,\omega^n_t; K)>0$. The property $\mathcal L(c)$ holds if $\mathcal L_t(c)$ holds for every $0\leq t\leq T-1$.
\end{definition}

Note that with $\lambda >0$ and $c = \min\left\{b-1, 1-a, \bar b_{t+1} -\bar b_t, \underline a_{t}-\underline a_{t+1}, 1\leq t\leq T-1\right\}$, $\mathcal L(c)$ holds for $\left(\Omega^n,\prob^{n,\lambda}\right)$, while with $\lambda = 0$, it is not necessarily true (see e.g. Example \ref{counter}). In addition to $\mathcal L(c)$, we also need to compare $\prob^{n,0}$ and $\prob$, in the following sense:

\begin{definition}\label{Pinclude}
Suppose $\prob \subset \allprob(\Omega)$ and $\prob^n \subset \allprob(\Omega^n)$ and $A^{n,T}$ is $\prob^n$-polar. $\prob$ includes the model $P^n\in\prob^n$, denoted as $P^n\in_n\prob$, if there exists $P\in \prob$, such that for every $B\in \mathcal{B}\left(\Omega^n\setminus A^{n,T}\right)\subset \mathcal B(\Omega)$, $P^n(B) = P(B)$, i.e. $P^n$ can be regarded as a probability on $\mathcal{B}(\Omega)$, which is only supported on $\mathcal{B}\left(\Omega^n\setminus A^{n,T}\right)$. Denote as $\prob^n\subset_n\prob$, if for every $P^n\in\prob^n$, $P^n\in_n\prob$.
\end{definition}

The next theorem verifies NA$_{\phi}(\Omega^n,\prob^{n,0})$ if $\mathcal L(c)$ holds. If in addition $\prob$ includes sufficiently many tree models such that $\prob^{n,0}\subset_n \prob$, then $\pi^n(\alpha,F)$ converges to $\pi(\alpha,F)$. These assumptions hold if the agent solely believes in tree models (see the example in Section \ref{example}), but in general do not exclude continuous models.

\begin{theorem}\label{convergence-c}
If there exists $c>0$, such that $\mathcal L(c)$ holds for $\left(\Omega^n,\prob^{n,0}\right)$, then NA$_{\phi}(\Omega^n,\prob^{n,0})$ holds for $n$ sufficiently large, and Theorem \ref{convergence} holds with $\alpha' = \alpha$. Furthermore, if $\prob^{n,0}\subset\prob$, then $0\leq \pi(\alpha,F)-\pi^n(\alpha,F) \leq M/2^n$ for $M>0$, independent of $n$ and $\alpha$.
\end{theorem}

\begin{proof}
For any $H\in\dstra^n$, since $\mathcal L(c)$ holds for $\left(\Omega^n,\prob^{n,0}\right)$, for every $0\leq t\leq T-1$, and $(\omega^n_0,\dots,\omega^n_t)\in \Omega^n_t$ that is not $\prob^{n,0}$-polar, there exists a non-$\prob^{n,0}$-polar path $(\omega^n_{t+1},\dots,\omega^n_T)$, such that on this path, for any $t\leq s\leq T-1$ and $1\leq i\leq d$, $H_{s,i}(S_{s+1,i}-S_{s,i}) \leq 0$ and $|S_{s+1,i} - S_{s,i}|\geq c$. Thus, similar to the proof for Lemma \ref{NA-dynamic}, if $(H\cdot S)_T \geq 0$ $\prob^{n,0}$-q.s., then $H_t = 0$ $\prob^{n,0}$-q.s. for $0\leq t\leq T-1$ by induction, and NA$(\Omega^n,\prob^{n,0})$ holds. Then the same argument as for Theorem \ref{NA} shows that NA$_{\phi}(\Omega^n,\prob^{n,0})$ holds for $n$ sufficiently large. We skip the details here.

Furthermore, by replacing the constant $\min\left\{b-1, 1-a, \bar b_{t+1} -b_t, \underline a_{t}-\underline a_{t+1}, 1\leq t\leq T-1\right\}$ in the proof of Lemma \ref{H-bound-n} by $c>0$ in the assumption, and considering the path $(\omega^n_{t+1},\dots,\omega^n_T)$ constructed above, the same argument as for Lemma \ref{H-bound-n} implies that $|H_t|$ is bounded for $0\leq t\leq T-1$ by induction. Then Lemma \ref{price-Hq-bound} implies that the optimal quantile hedging strategy $(\hat H^n,\hat q^n)$ for the target expected success ratio $\alpha$ is bounded, uniformly in $n$.

Let $F^{'n} \in \mathcal A^{n,F}(\alpha)$, such that $G^{\pi^n(\alpha,F),\hat H^n,\hat q^n}\geq F^{'n}$ $\prob^{n,0}$-q.s., $C_1$ be the bound for $\hat H^n$ and $\hat q^n$, $C_2$ and $C_3$ be the Lipschitz constants for $\phi$ and $F$, respectively. Then similar to the proof of Theorem \ref{convergence}, we can construct a semi-analytic $F':\Omega \rightarrow \Real_+$, such that $F'(\omega) = F^{'n}(\omega^n)$, if $\omega\in J^n(\omega^n)$, and $G^{\pi^n(\alpha,F)+M/2^n,\hat H^n,\hat q^n} \geq F'$ $\prob$-q.s., with $M = C_1\left(T d+C_2 k\right)+ C_3$.

Furthermore, since $\lambda = 0$ in Definition \ref{ndef}, (\ref{PtoPn}) implies that for every $\omega^n\in\Omega^n\setminus A^{n,T}$, $P^n\in\prob^n$ and the corresponding $P\in\prob$ satisfy $P^n(\omega^n) = P(J^n(\omega^n))$ and $P^n(A^{n,T}) = 0$, where $A^{n,T}$ is defined in (\ref{An}). Thus from (\ref{alphap}) (with $\lambda = 0$), $F'\in\mathcal A(\alpha)$, and that $\pi^n(\alpha,F) +  M/2^n\geq \pi(\alpha,F)$. 

On the other hand, since $\prob^{n,0}\subset_n \prob$, and there is only finitely many paths in $\Omega^n$, as defined in (\ref{barPn}), $\bar P^n \in_n \prob$. Since all martingales in $\QP^n$ are equivalent to $\bar P^n$, $\QP^n \subset_n \QP$. Thus each $P^n\in\prob^{n,0}$ and $Q^n\in\QP^n$ (with a slight abuse of notation) can also regarded as a probability measure on $\borel(\Omega)$, but only supported on $\borel(\Omega^n\setminus A^{n,T})$.

Consider maximizing the expected success ratio with initial capital $x\geq 0$. If $G\in \mathcal C(x)$, then define $G^n:\Omega^n \rightarrow \Real_+$ by $G^n = G1_{\{\Omega^n\setminus A^{n,T}\}}$. Since $\QP^n\subset_n \QP$, and $A^{n,T}$ is $\QP^n$-polar,
\begin{equation*}
\pi^n\left(G^n\right) = \sup\limits_{Q^n\in\QP^n}\mathbb{E}_{Q^n}\left[G^n\right] =\sup\limits_{Q^n\in\QP^n}\mathbb{E}_{Q^n}\left[G\right]\leq \sup\limits_{Q\in\QP}\mathbb{E}_{Q}\left[G\right]  \leq x.
\end{equation*}
Thus $G^n \in \mathcal C^n(x)$. Furthermore, since $\psi^{G^n} = \psi^{G}$ on $\Omega^n\setminus A^{n,T}$, $\prob^{n,0}\subset_n\prob$, and $A^{n,T}$ is $\prob^{n,0}$-polar, we obtain $\displaystyle\inf_{P^n\in\prob^{n,0}}\mathbb{E}_{P^n}\left[\psi^{G^n}\right] =   \displaystyle\inf_{P^n\in\prob^{n,0}}\mathbb{E}_{P^n}\left[\psi^{G}\right] \geq \displaystyle\inf_{P\in\prob}\mathbb{E}_P\left[\psi^{G}\right]$. Thus, 
\begin{equation*}
V(x,F)=\sup_{G \in \mathcal{C}(x)}\inf_{P\in\prob} \mathbb{E}_P\left[\psi^{G}\right] \leq \sup_{G \in \mathcal{C}(x)}\inf_{P^n\in\prob^{n,0}} \mathbb{E}_P\left[\psi^{G}\right]\leq \sup_{G^n \in \mathcal{C}^n(x)}\inf_{P^n\in\prob^{n,0}} \mathbb{E}_{P^n}\left[\psi^{G^n}\right] = V^n(x,F).
\end{equation*}
Thus, from the proof of Proposition \ref{inverse},
\begin{equation*}
\pi(\alpha,F) \geq \inf\{x \geq 0: V(x,F) \geq \alpha\} \geq \inf\left\{x\geq 0: V^n(x,F) \geq \alpha\right\} = \pi^n(\alpha,F). \qedhere
\end{equation*}
\end{proof}

\section{\textbf{Examples}}\label{example}
This section shows two examples in which $\prob$ does not have a dominating measure and Theorem \ref{convergence-c} applies so that the quantile hedging price can be approximated from the discretized market. In the first example, $\prob$ includes all finitely supported tree models, so that every path can be of probability one under some model, and the quantile hedging price is the superhedging price multiplied by the target expected success ratio $\alpha$. The second examples assumes a smaller $\prob$, and the quantile hedging price is convex in $\alpha$. We set $T=d=1$ in both examples, for ease of calculation. Multi-period models add more constraints to $P^n\in\prob^{n,0}$ and $Q^n\in\QP^n$, but do not change the nature of the nonlinear programming problem.

\textbf{Example 1}. Suppose there is one stock with $S_0 = 1$ and $\Omega_1 = [1/2,3/2]$. Let $\prob$ be the set of all probability measures on $\Omega_1$ that is supported on finitely many paths. Since $S_1 - S_0$ can be both positive and negative with positive probability under some model in $\prob$, the market satisfies NA$(\Omega,\prob)$. Since there is a continuum of paths that are not $\prob$-polar, $\prob$ is not dominated. By the construction in Definition \ref{ndef}, $\prob^{n,0} \subset_n \prob$, and for $n\geq 1$, $\mathcal L(0.25)$ holds for $\left(\Omega^n,\prob^{n,0}\right)$. Thus Theorem \ref{convergence-c} applies.

Given a set of options $\phi$ which does not create arbitrage opportunities, the maximum expected success ratio with any contingent claim $F$, and initial capital $x\geq 0$ in the discretized market is
\begin{align}
V^n(x,F)=\inf_{a\geq 0}\left\{xa + \inf_{\prob^{n,0}\times\QP^n}\sum\limits_{w^n\in \Omega^n}\left(P^n(w^n)-a Q^n(w^n)F(w^n)\right)^{+}\right\},\label{mini}
\end{align}
 of which the inverse is the quantile hedging price.
\begin{proposition}\label{linear}
With $\prob$ defined above, in the discretized market with model uncertainty $\prob^{n,0}$, $\pi^n(\alpha,F) = \alpha\pi^n(F)$.
\end{proposition}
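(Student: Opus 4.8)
The plan is to make the feasible set $\mathcal{C}^n(\alpha)$ explicit and then combine it with the superhedging duality available in the dominated market $\Omega^n$. The crucial observation for this particular $\prob$ is that $\prob^n$ contains the Dirac mass $\delta_{w^n}$ at every grid point $w^n\in\Omega^n$ (indeed, by convexity $\prob^n$ is the set of \emph{all} probability measures on the finite set $\Omega^n$): since $\prob$ is the set of all finitely supported measures on $[1/2,3/2]$ we have $\delta_{w^n}\in\prob$, and since $\mathcal{J}(1/2)$ holds in the original market no modification is needed, so $P^n(\cdot)=P(J^n(\cdot))$ and the discretization of $\delta_{w^n}$ is again $\delta_{w^n}$ because $w^n\in J^n(w^n)$. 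Consequently no $w^n\in\Omega^n$ is $\prob^n$-polar, so ``$\prob^n$-q.s.'' means ``at every point of $\Omega^n$.''

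From this I would read off $\mathcal{C}^n(\alpha)$. Since $F=S_1^2\geq 1/4>0$ and $F'\leq F$ on $\Omega^n$ we have $\psi^{F'}=F'/F$; and since $\prob^n$ contains every Dirac, $\inf_{P^n\in\prob^n}\mathbb{E}_{P^n}[\psi^{F'}]=\min_{w^n\in\Omega^n}F'(w^n)/F(w^n)$, so the success-ratio constraint is equivalent to $F'\geq\alpha F$ pointwise. Recalling that every function on the finite set $\Omega^n$ is upper semianalytic, this gives
\begin{equation*}
\mathcal{C}^n(\alpha)=\{F':\Omega^n\to\mathbb{R}_+,\ \alpha F\leq F'\leq F\}.
\end{equation*}

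Finally, by the superhedging duality in the dominated market $\Omega^n$ (the argument of Proposition \ref{quant-price1} applied to $(\Omega^n,\prob^n)$, where the superhedging price of any $F'$ equals $\sup_{Q^n\in\QP^n}\mathbb{E}_{Q^n}[F']$), one obtains $\pi^n(\alpha,F)=\inf_{F'\in\mathcal{C}^n(\alpha)}\sup_{Q^n\in\QP^n}\mathbb{E}_{Q^n}[F']$. The functional $F'\mapsto\sup_{Q^n\in\QP^n}\mathbb{E}_{Q^n}[F']$ is monotone (each $Q^n\sim R^n$ has full support on $\Omega^n$) and positively homogeneous, and by the previous step the pointwise-smallest element of $\mathcal{C}^n(\alpha)$ is $\alpha F$; hence the infimum is attained at $F'=\alpha F$ and equals $\sup_{Q^n\in\QP^n}\mathbb{E}_{Q^n}[\alpha F]=\alpha\sup_{Q^n\in\QP^n}\mathbb{E}_{Q^n}[F]=\alpha\pi^n(F)$.

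The only genuine obstacle is the first paragraph: one must verify that in this example the construction of $\prob^n$ really is the no-modification case (iv) of Section \ref{Pn} and that every $\delta_{w^n}$ is an admissible tree model consistent with NA$(\prob)$; once that is granted, the rest is routine bookkeeping. It is also worth checking the degenerate cases $\alpha\in\{0,1\}$ and that $F$ is bounded away from $0$, so that writing $\psi^{F'}=F'/F$ causes no difficulty.
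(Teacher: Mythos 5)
Your proof is correct, but it takes a genuinely different route from the paper's. The paper proves this by computing the value function $V^n(x,F)$ directly from the generalized Neyman--Pearson representation~(\ref{optimal-test}): it evaluates the inner infimum over $\prob^n\times\QP^n$ by a case analysis on whether $a \gtrless 1/s^*(n)$ (where $s^*(n)=\sup_{Q^n}\expec_{Q^n}[F]$), obtaining $V^n(x,F)=x/s^*$ for $x<s^*$ and $1$ otherwise, then inverting via Proposition~\ref{inverse} to get $\pi^n(\alpha,F)=\alpha s^*$. You instead bypass the Neyman--Pearson machinery altogether: you observe that $\prob^n$ equals $\allprob(\Omega^n)$ (because $\prob$ contains every Dirac, the discretization of a Dirac at a grid point is itself, and $\prob^n$ is convex), which collapses the worst-case constraint $\inf_{P^n}\expec_{P^n}[\psi^{F'}]\geq\alpha$ to the pointwise constraint $F'\geq\alpha F$, and then the dual representation $\pi^n(\alpha,F)=\inf_{F'\in\mathcal{C}^n(\alpha)}\sup_{Q^n\in\QP^n}\expec_{Q^n}[F']$ (the $\Omega^n$-analogue of Proposition~\ref{quant-price1}) is minimized at $F'=\alpha F$ by monotonicity. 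Your argument is shorter, more elementary, and directly delivers the intuition the paper only states informally after its proof (``to quantile hedge $F$ with expected success ratio $\alpha<1$, the hedging strategy needs to superhedge $\alpha F$ on every path''). The paper's more computational route serves the pedagogical purpose of demonstrating how the numerical recipe~(\ref{optimal-test}) is actually applied, which is what the subsequent figures rely on. One small point worth noting explicitly in your write-up: since $F=S_1^2\geq 1/4>0$ on $\Omega$, there is no $0/0$ issue in $\psi^{F'}=F'/F$, and $\alpha F$ is indeed feasible for every $\alpha\in[0,1]$, so the infimum is attained.
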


\begin{proof}
For any $Q^n\in\QP^n$, let $s(Q^n) = \sum\limits_{\omega^n\in\Omega^n}Q^n(\omega^n)F(\omega^n)$ and $s^*(n) = \displaystyle\sup_{Q^n\in\QP^n}s(Q^n)$. In the minimization problem in (\ref{mini}), given the constant $a\geq 0$, there are three cases:

(i) If $s^*(n)\geq \frac{1}{a}$ and there exists $\omega^n\in\Omega^n$ and $Q^n\in\QP^n$, such that $aQ^n(\omega^n)F(\omega^n) \geq 1$, then since $Q^n$ is only supported on $\Omega^n\setminus A^{n,T}$, there exists a sequence of $\left\{P^{n,i}\right\}_{i\geq 1} \subset \prob^{n,0}$, such that for each $i\geq 1$, $P^{n,i}$ is equivalent to $Q^n$, and $P^{n,i}(\omega^n)$ increases to $1$. Thus $\inf\limits_{\prob^{n,0}\times\Q^n}\sum\limits_{\omega^n\in\Omega^n}(P^n(\omega^n)-aQ^n(\omega^n)F(\omega^n))^{+} = 0$.

(ii) If  $s^*(n)\geq\frac{1}{a}$ and for every $\omega^n\in\Omega^n$ and $Q^n\in\QP^n$, $aQ^n(\omega^n)F(\omega^n) < 1$, then there exists a sequence $\left\{Q^{n,i}\right\}\subset\QP^n$, such that $s(Q^{n,i})$ increases and converges to $\frac{1}{a}$. Define $P^{n,i}\in\prob^{n,0}$ by $P^{n,i}(\omega^n) = \frac{Q^{n,i}(\omega^n)F(\omega^n)}{s(Q^{n,i})}\geq aQ^{n,i}(\omega^n)F(\omega^n)$. Then $\sum\limits_{\omega^n\in\Omega^n}\left[(P^{n,i}(\omega^n)-aQ^{n,i}(\omega^n)F(\omega^n))^{+}\right] = 1 - a s(Q^{n,i})$, which converges $0$. Thus $\inf\limits_{\prob^{n,0}\times\Q^n}\sum\limits_{\omega^n\in\Omega^n}\left(P^n(\omega^n)-aQ^n(\omega^n)F(\omega^n)\right)^{+} = 0$.

(iii) If  $s^*(n) < \frac{1}{a}$. For any $Q^n\in\QP^n$, if $s(Q^n) = 0$, then since $F\geq 0$, $Q^n(\omega^n)F(\omega^n) = 0$ for every $\omega^n\in\Omega^n$, and  $\inf\limits_{\prob^{n,0}}\sum\limits_{\omega^n\in\Omega^n}(P^n(\omega^n)-aQ^n(\omega^n)F(\omega^n))^{+} = 1$. 

Otherwise, define $\hat P^n\in\prob^{n,0}$ by $\hat P^n(\omega^n) = Q^n(\omega^n)F(\omega^n)/s(Q^n) \geq aQ^n(\omega^n)F(\omega^n)$ for every $\omega^n\in\Omega^n$. Then for any $\tilde P \in \prob^{n,0}$, 
\begin{align*}
&\sum\limits_{\omega^n\in\Omega^n}\left(\tilde P^n(\omega^n)-aQ^n(\omega^n)F(\omega^n)\right)^{+}\geq \sum\limits_{\omega^n\in \Omega^n}\left(\tilde P^n(\omega^n)-aQ^n(\omega^n)F(\omega^n)\right)\\
=&1-as(Q^n) = \sum\limits_{\omega^n\in \Omega^n}\left(\hat P^n(\omega^n)-aQ^n(\omega^n)F(\omega^n)\right)=\sum\limits_{\omega^n\in \Omega^n}\left(\hat P^n(\omega^n)-aQ^n(\omega^n)F(\omega^n)\right)^+.
\end{align*}
Thus, $\displaystyle\inf_{\prob^{n,0}}\sum\limits_{\omega^n\in\Omega^n}(P^n(\omega^n)-aQ^n(\omega^n)F(\omega^n))^{+} = \sum\limits_{\omega^n\in \Omega^n}\left(\hat P^n(\omega^n)-aQ^n(\omega^n)F(\omega^n)\right)^+=  1-as(Q^n)$. Therefore  $\inf\limits_{\prob^{n,0}\times\QP^n}\sum\limits_{\omega^n\in\Omega^n}(P^n(\omega^n)-aQ^n(\omega^n)F(\omega^n))^{+} = 1-as^*(n)$.

To summarize the three cases above, for $x\geq 0$,
\begin{align*}
&xa + \inf_{\prob^{n,0}\times\QP^n}\sum\limits_{\omega^n\in\Omega^n}^n(P^n(\omega^n)-aQ^n(\omega^n)F(\omega^n))^{+}
=\begin{cases} xa &\text{ if } a \geq 1/s^{*}(n)\\
xa + (1-as^*(n)) &\text{ if } a < 1/s^{*}(n),
\end{cases}
\end{align*}
and $V^n(x,F) = \inf\limits_{a\geq 0}\left\{xa + \inf\limits_{\prob^{n,0}\times\QP^n}\sum\limits_{\omega^n\in\Omega^n}^n(P^n(\omega^n)-aQ^n(\omega^n)F(\omega^n))^{+}\right\}
= \begin{cases} \frac{x}{s^*(n)} &\text{ if } x < s^{*}(n)\\
1 &\text{ if } x \geq s^{*}(n).
\end{cases}$

The quantile hedging price is the inverse function of $V^n(x)$: $\pi^n(\alpha,F) = \alpha s^*(n)$ for $0\leq \alpha \leq 1$, which indicates that $s^*(n)$ is the superhedging price.
\end{proof}

The intuition behind Proposition \ref{linear} is that any $\omega^n$  is of probability arbitrarily close to $1$ under some model in $\prob^{n,0}$, so that to quantile hedge $F$ with expected success ratio $\alpha<1$, the hedging strategy needs to superhedge $\alpha F$ on $\omega^n$. Since Theorem \ref{convergence-c} applies, $\pi(\alpha,F) = \alpha\pi(F)$ also holds. This example also demonstrates the difference between quantile hedging with expected success ratio and quantile hedging with success probability, because for the latter, quantile hedging with any given probability is actually superhedging.

\textbf{Example 2}. Let $\bar{\prob}$ includes all convex combination of finitely many $P$'s from $\prob$ in Example 1, which also satisfies the following properties:

(i) $P\leq 0.05$,

(ii) For $n = 5$ and $6$, and each $\omega'\in \{1\}\times \D_n\cap \left[\frac 1 2, \frac 3 2\right]$, there is at most one $\omega\in J^n(\omega')$, such that $P(\omega)>0$, 

(iii) $P\left(\left[\frac 5 4,\frac 3 2\right]\right)\geq 0.25$, and $P\left(\left[\frac 1 2,\frac 3 4\right]\right)\geq 0.25$.

In other words, $\bar\prob$ includes all convex combinations of models from $\prob$, with restriction that each model (i) assigns probability of no more than 0.05 to any path, (ii) assigns positive probability to only one path in every (narrow) interval $J^5(\omega')$ or $J^6(\omega')$, and (iii) the probability of both the first and the fourth quarter of the range of $S_1$ is greater than or equal to 0.25. Denote set of probability measures on $\borel(\Omega^n)$ in Definition \ref{ndef} as $\bar \prob^{n,\lambda}$. 

\begin{lemma}
For $n\geq 5$, $\bar\prob^{n,0} \subset_n \bar\prob$ and $\mathcal L(0.25)$ holds for $\left(\Omega^n,\prob^{n,0}\right)$. Thus Theorem \ref{convergence-c} applies.
\end{lemma}

\begin{proof}
For each $P\in\bar \prob$, suppose $P$ is a convex combination of $P^i$,  $1\leq i\leq m \in \mathbb N$, and each $P^i$ satisfies properties (i)-(iii) in the definition of $\bar \prob$ above. Let $P^n, P^{n,i}$ be the discretized probability measures in $\bar\prob^{n,0}$ corresponding to $P,P^i$, $1\leq i\leq m$, respectively. Then $P^n$ is a convex combination of $P^{n,i}$, $1\leq i\leq m$. Furthermore, each $P^{n,i}$, if regarded as a probability measure on $\borel (\Omega)$, inherits from $P^i$ properties (i)-(iii). Therefore $P^n\in_n \bar\prob$ and $\bar\prob^{n,0}\subset_n\bar{\prob}$.

Finally, each $P^{n,i}$ satisfying (iii) implies that $P^n\left(\left[\frac 5 4,\frac 3 2\right]\right)\geq 0.25$, and $P^n\left(\left[\frac 1 2,\frac 3 4\right]\right)\geq 0.25$. Thus $\mathcal L(0.25)$ holds for $\left(\Omega^n,\prob^{n,0}\right)$.
\end{proof}

Figure \ref{price1} shows the quantile hedging price in the discretized market for $n = 5,6,7,8,9$, for $F = S_1^2$, without trading in options\footnote{The minimization in (\ref{mini}) is carried out by using CVX, a package for specifying and solving convex programs \cite{cvx,gb08}. We thank Agostino Capponi for recommending CVX to us.}. The quantile hedging prices converge quickly. $\pi^8(\alpha,F)$ (with 257 paths) and $\pi^9(\alpha,F)$ (with 513 paths) almost coincide, and can be used as the approximation of the quantile hedging price in the original market.

\begin{figure}
\centering
\includegraphics[width=0.8\textwidth]{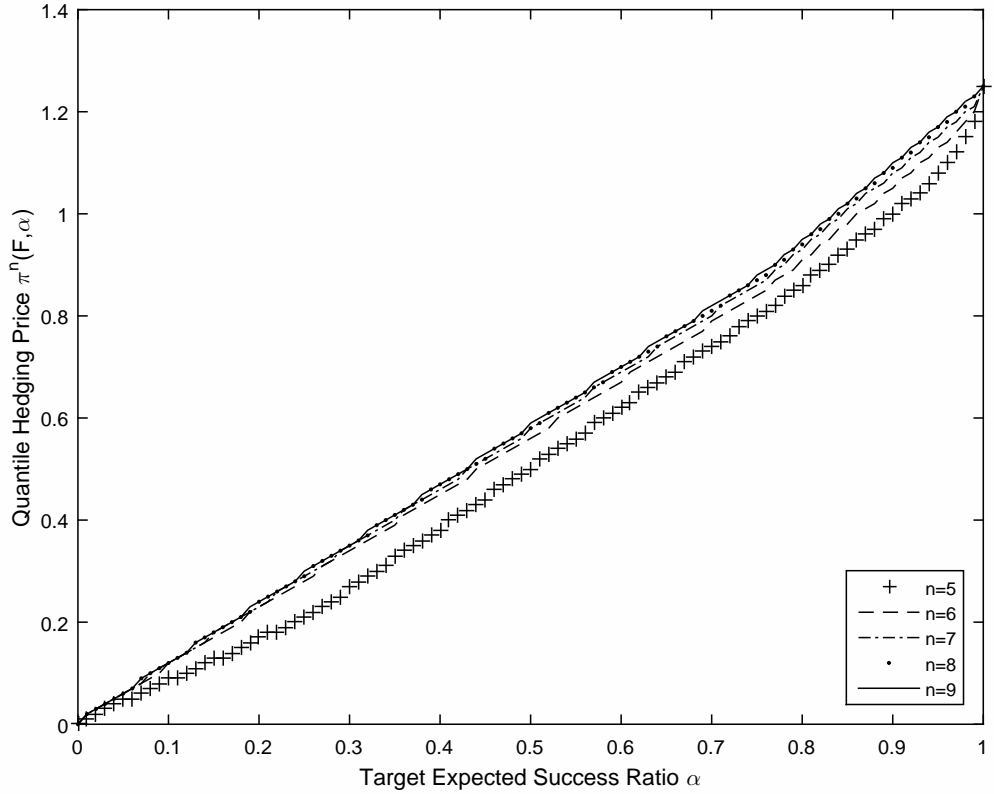}
\caption{The quantile hedging price for $F = S_1^2$ in the discretized market with model uncertainty $\bar\prob^{n,0}$, for $n = 5,6,7,8,9$.}
\label{price1}
\end{figure}

There are less models in $\bar \prob$ than in $\prob$ and the quantile hedging price shows some convexity in the target expected success ratio $\alpha$, which agrees with Proposition \ref{inverse-n}. The left panel of Figure \ref{price2} shows the difference between the quantile hedging price with model uncertainty $\prob^{9,0}$ and $\bar \prob^{9,0}$. As Proposition \ref{linear} shows, the solid line corresponding to $\prob^{9,0}$ is a straight line, while the dotted curve corresponding to $\bar{\prob^{9,0}}$ is convex in $\alpha$.

Finally, we add two put options on the stock, with strike prices 0.75 and 1, and market prices $0.075$ and $0.2$, respectively. The right panel of Figure \ref{price2} shows the quantile hedging price with and without trading in these options, under model uncertainty $\bar\prob^{9,0}$. The quantile hedging price is always cheaper with trading in options.

\begin{figure}
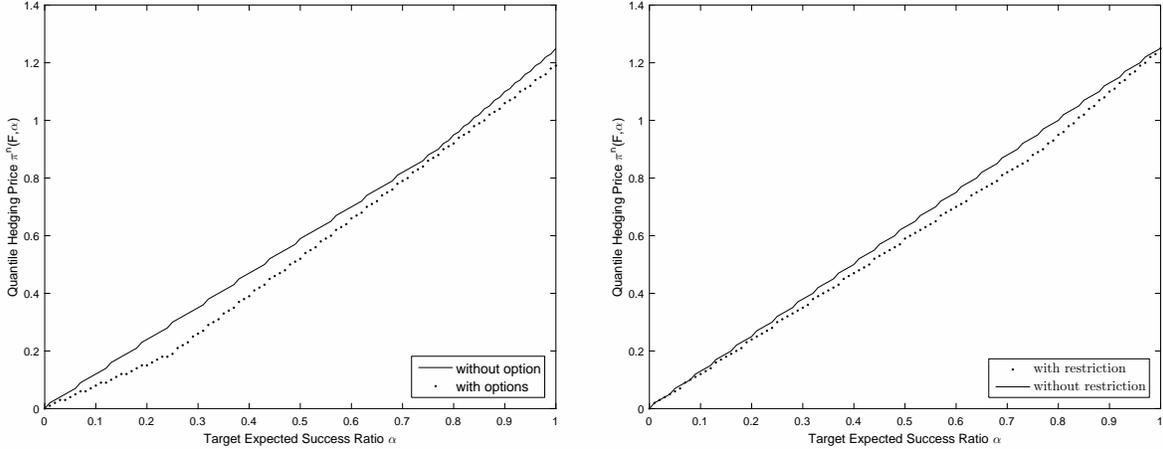

\includegraphics[width=.45\textwidth]{price2}
~\hspace{0.15in}
\includegraphics[width=.45\textwidth]{price3}
\caption{Left panel: the comparison between the quantile hedging price for $F=S_1^2$, with (under $\bar\prob^{9,0}$, dotted line) and without (under $\prob^{9,0}$, solid line) restriction on the model uncertainty. Right panel: the comparison between the quantile hedging price with (dotted line) and without (solid line) trading two put options, under model uncertainty $\bar\prob^{9,0}$.}
\label{price2}
\end{figure}

\appendix
\numberwithin{lemma}{section}
\numberwithin{proposition}{section}
\section*{Appendix}
\setcounter{lemma}{0}
\setcounter{proposition}{0}
\renewcommand{\thelemma}{A.\arabic{lemma}}
\renewcommand{\theproposition}{A.\arabic{proposition}}
\renewcommand{\theequation}{A.\arabic{equation}}

\begin{lemma}\label{NA-dynamic}
If $\lambda>0$, then NA$(\Omega^n,\prob^{n,\lambda})$ holds.
\end{lemma}

\begin{proof}
We prove this lemma by contradiction: suppose there exits an arbitrage strategy $H\in \dstra^n$, i.e. $(H\cdot S)_T \geq 0$ $\prob^n$-q.s., and we show by induction that $(H\cdot S)_T = 0$ $P^n$-a.s. under every $P^n\in\prob^{n,\lambda}$. 

Fix a $P^n\in\prob^n$, For $t\geq 1$, let $\omega^n_{t,i} = \bar b_t 1_{\{H_{t-1,i}(\omega^n_0,\dots,\omega^n_{t-1})<0\}} + \underline a_t 1_{\{H_{t-1,i}(\omega^n_0,\dots,\omega^n_{t-1})\geq 0\}}$, where the second index $i$ in the subscripts of $d$-vector $\omega^n_t$ and $H_t$ indicates their $i$-th entry, for any $0\leq t\leq T$. 

Consider the path $\omega^n = \left(\omega^n_1,\dots,\omega^n_T\right)$. Since $\lambda>0$, $P^n(\omega^n) >0$, and therefore $(H\cdot S)_T(\omega^n) \geq 0$. On the other hand, letting 
\begin{equation}
c = \min\left\{b-1, 1-a, \bar b_{t+1} -\bar b_t, \underline a_{t}-\underline a_{t+1}, 1\leq t\leq T-1\right\}>0, \label{c}
\end{equation}
then on $\omega^n$, for any $0\leq s\leq T-1$ and $1\leq i\leq d$, $H_{s,i}(S_{s+1,i}-S_{s,i}) \leq 0$, and $|S_{s+1,i} - S_{s,i}|\geq c$, which implies that for each $1\leq i\leq d$, $H_{0,i}(S_{1,i}-S_{0,i}) \geq -\sum\limits_{s=1}^{T-1}H_s(S_{s+1}-S_s) - \sum\limits_{j\neq i}^d H_{0,j}(S_{1,j}-S_{0,j}) \geq 0$. Thus $H_{0} =\mathbf 0$ and therefore $(H\cdot S)_1 = 0$.

For $1\leq t\leq T-1$, suppose $(H\cdot S)_{t} = 0$ $P^n$-a.s. For every $(\omega^n_0,\dots,\omega^n_t)\in\Omega^n_t$, similar to the argument for $H_0$, there exists $(\omega^n_{t+1},\dots,\omega^n_T)$ with positive probability, on which for any $t\leq s\leq T-1$ and $1\leq i\leq d$, $H_{s,i}(S_{s+1,i}-S_{s,i}) \leq 0$ and $|S_{s+1,i} - S_{s,i}|\geq c$. Then $(H\cdot S)_T \geq 0$ $P^n$-a.s. implies that $H_{t}(\omega_0,\dots,\omega_t) = \mathbf 0$ for $P^n$-a.s. every $(\omega^n_0,\dots,\omega^n_t)\in \Omega^n_t$. Thus $(H\cdot S)_{t+1} = 0$ $P^n$-a.s.
\end{proof}

\begin{lemma}\label{H-bound-n}
If $\lambda >0$, $H\in\dstra^n$ and $(H\cdot S)_T \geq -D$ $\prob^{n,\lambda}$-q.s., where $D >0$ is independent of $n$, then there exists $\tilde H\in\dstra^n$ such that $(\tilde H\cdot S)_T = (H\cdot S)_T$ $\prob^{n,\lambda}$-q.s. and $||\tilde H||<M$ for some $M>0$ independent of $n$.
\end{lemma}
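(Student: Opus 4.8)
My plan is a forward induction on time that first throws away the positions of $H$ that do not matter and then bounds the remaining ones node by node, using NA$(\prob^n)$ together with the feature, built into the construction of $\prob^n$, that from any non-$\prob^n$-polar node at which the price is not q.s.\ frozen the stock can move all the way up (some coordinate reaching $b_{t+1}$) and all the way down (reaching $a_{t+1}$) with positive probability under some model; this is where the strict inequalities $a_T<\dots<a_0=b_0<\dots<b_T$ enter, through the constant $c:=\min_{1\le i\le T}\{b_i-b_{i-1},\ a_{i-1}-a_i\}>0$.

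First I would apply the usual reduction, setting $H_s\equiv 0$ at every $\prob^n$-polar history and at every history after which the price stays q.s.\ constant in all coordinates; neither change affects $(H\cdot S)_T$ $\prob^n$-q.s.\ (a non-$\prob^n$-polar path has all its prefixes non-$\prob^n$-polar, and a frozen coordinate contributes $0$ whatever the position), nor the conclusion. Call a non-$\prob^n$-polar history at time $t$ \emph{live} if the price does not stay q.s.\ constant after it. The one consequence of no arbitrage I need is: for every non-$\prob^n$-polar history $\xi$ at time $t+1$ there is a non-$\prob^n$-polar continuation of $\xi$ to $T$ along which $(H\cdot S)_T-(H\cdot S)_{t+1}\le 0$; otherwise the strategy coinciding with $H$ on the subtree rooted at $\xi$ from time $t+1$ on and vanishing elsewhere would satisfy $(H\cdot S)_T\ge 0$ $\prob^n$-q.s.\ without being $\equiv 0$ $\prob^n$-q.s., contradicting NA$(\prob^n)$.

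The core estimate: fix a live history $\eta=(\omega_1^n,\dots,\omega_t^n)$, write $g:=(H\cdot S)_t(\eta)$ and $\mu:=\max_{1\le i\le T}\{b_i-a_{i-1},\ b_{i-1}-a_i\}$. By the construction of $\prob^n$ in Section \ref{Pn} (the $\lambda$-perturbations, cases (i)--(iii) and, for $d\ge 2$, their analogue), $\eta$ has, for each assignment of a sign to each coordinate, a non-$\prob^n$-polar successor $(\eta,\omega_{t+1}^n)$ whose $i$-th entry equals $b_{t+1}$ or $a_{t+1}$ according to that sign. I would take the successor that moves coordinate $i$ down when $H_t^i(\eta)\ge 0$ and up when $H_t^i(\eta)<0$; then for each $i$ the increment $S_{t+1}^i-S_t^i$ is opposite in sign to $H_t^i(\eta)$ with $\abs{S_{t+1}^i-S_t^i}\ge c$, so $H_t(\eta)\cdot(S_{t+1}-S_t)\le -c\,\norm{H_t(\eta)}$ at this successor. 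Since the successor is non-$\prob^n$-polar, the no-arbitrage input produces a non-$\prob^n$-polar path through it along which $(H\cdot S)_T\le (H\cdot S)_{t+1}=g+H_t(\eta)\cdot(S_{t+1}-S_t)$; combining with $(H\cdot S)_T\ge -D$ gives $-c\,\norm{H_t(\eta)}\ge -D-g$, i.e.\ $\norm{H_t(\eta)}\le (D+g)/c$ (in particular $g\ge -D$, so the bound is non-vacuous). At a non-$\prob^n$-polar but non-live $\eta$ the position is already $0$.

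Finally I would iterate on $t$: at $t=0$, $g=0$ so $\norm{H_0}\le M_0:=D/c$; and if $\norm{H_s}\le M_s$ for all $s<t$ then $\abs{(H\cdot S)_t}\le d\mu\sum_{s<t}M_s$ at every history of time $t$, whence $\norm{H_t}\le M_t:=(D+d\mu\sum_{s<t}M_s)/c$ by the previous step. Then $\tilde H:=H$ (after the second-paragraph modifications) satisfies $(\tilde H\cdot S)_T=(H\cdot S)_T$ $\prob^n$-q.s.\ and $\norm{\tilde H}\le M:=\max_{0\le t\le T-1}M_t$, and $M$ depends only on $D,d,T,c,\mu$ --- quantities fixed by $\Omega$ --- hence is independent of $n$. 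The step I expect to be the main obstacle is the geometric input in paragraph three, namely checking that at every live history the $2^d$ extreme-corner successors really are non-$\prob^n$-polar; verifying this requires going carefully through cases (i)--(iv) of the construction of $\prob^n$, and for $d\ge 2$ the argument should be run coordinatewise (a q.s.-frozen coordinate contributing $0$), exactly as in the $d\ge 2$ part of Lemma \ref{NA-dynamic}.
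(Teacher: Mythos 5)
Your proposal is correct in outline and uses the same skeleton as the paper: set $\tilde H$ to $0$ at $\prob^n$-polar or price-frozen histories, exploit the feature built into $\prob^n$ that from every live non-$\prob^n$-polar node each coordinate can jump both up and down by at least $c := \min_t\{b_t-b_{t-1},\,a_{t-1}-a_t\}$, and then run a forward induction on $t$ that controls $\|H_t\|$ by $(D+|g|)/c$ with $|g|\le d\mu\sum_{s<t}M_s$ bounded via the earlier $M_s$. The one genuine routing difference is in producing, for a given live $\eta$ at time $t$, a non-polar path through $\eta$ with total tail $\sum_{s\ge t}H_s(\omega^n_{s+1}-\omega^n_s)\le -c\|H_t(\eta)\|$. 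You make the adversarial choice only at the first step and then invoke a no-arbitrage argument (consider the strategy that follows $H$ on the subtree rooted at the chosen successor and vanishes elsewhere) to find a tail continuation with $(H\cdot S)_T-(H\cdot S)_{t+1}\le 0$. The paper instead constructs the entire adversarial path directly: at every later step $s\ge t$, it again picks the up/down direction opposite in sign to $\tilde H_s$ at the current node, so that each increment is $\le 0$ term by term (and frozen nodes contribute $0$ because $\tilde H$ was zeroed there). The paper's construction is a touch more elementary — it never appeals to no-arbitrage, only to the geometry of $\prob^n$ — while your version trades a multi-step selection for a single application of NA.

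One thing you should be careful about to avoid an apparent circularity: the no-arbitrage input you need is for dynamic trading only, i.e.\ Lemma \ref{NA-dynamic}, whose proof builds martingale measures directly and does not rely on the present lemma. Writing it as ``NA$(\mathcal P^n)$'' is ambiguous, because the paper reserves NA$(\mathcal P^n)$ for the semi-static market with options, and that result (Theorem \ref{NA-n}) \emph{is} proved using Lemma \ref{H-bound-n}; invoking it here would be circular. As long as the citation is to Lemma \ref{NA-dynamic}, your argument is sound. The point you flag at the end about the $2^d$ extreme-corner successors being non-$\prob^n$-polar, and the coordinate-wise treatment when some coordinates are q.s.\ frozen (setting the corresponding position component to zero), is indeed the part that needs checking against the case analysis in the construction of $\mathcal P^n$; the paper handles it implicitly by the WLOG reduction and by only writing out $d=1$.
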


\begin{proof}
For $0\leq t\leq T-1$, if   $(\omega^n_0,\dots,\omega^n_t)\in\Omega^n_t$ is $\prob^{n,\lambda}$-polar, then let $\tilde H_t(\omega^n_0,\dots,\omega^n_t) = 0$. Otherwise, let $\tilde H_t = H_t$. Thus $(\tilde H\cdot S)_T = (H\cdot S)_T$ $\prob^{n,\lambda}$-q.s. We show the boundedness of $\tilde H$ by induction.

As argued in the proof of Lemma \ref{NA-dynamic}, for every $0\leq t\leq T-1$ and non-$\prob^{n,\lambda}$-polar $(\omega^n_0,\dots,\omega^n_t)\in \Omega^n_t$, there exists a non-$\prob^{n,\lambda}$-polar path $(\omega^n_{t+1},\dots,\omega^n_T)$, on which for any $t\leq s\leq T-1$ and $1\leq i\leq d$, $H_{s,i}(S_{s+1,i}-S_{s,i}) \leq 0$ and $|S_{s+1,i} - S_{s,i}|\geq c$, where $c$ is defined in (\ref{c}). 

Thus for $\tilde H_0$, since $(\tilde H\cdot S)_T \geq -D$ $\prob^n$-q.s., by on the path constructed above, for every $1\leq i\leq d$, $H_{0,i}(S_{1,i} - S_{0,i}) \geq -\sum\limits_{s=1}^{T-1}H_s(S_{s+1}-S_s) - \sum\limits_{j\neq i}^d H_{0,j}(S_{1,j}-S_{0,j}) -D \geq -D$, and thus $ |H_{0,i}|\leq \frac{D}{c}$.
		
For $1\leq t\leq T-1$, assume that for $0\leq s \leq t-1 $, $|\tilde H_s|$ are bounded. For every non-$\prob^{n,\lambda}$-polar $(\omega^n_0,\dots,\omega^n_t) \in \Omega^n_t$, consider the path constructed above. Since $(\tilde H\cdot S)_T \geq -D$ $\prob^{n,\lambda}$-q.s., and $\tilde H_s$ are bounded for $0\leq s\leq t-1$, on this non-$\prob^{n,\lambda}$-polar path, $\sum\limits_{s=t}^{T-1}\tilde H_s(\omega^n_0,\dots,\omega^n_s)(\omega^n_{s+1}-\omega^n_s) > -D_t$, where $D_t >0$ is independent of $n$. Then similar to the argument for $H_0$, for every $1\leq i \leq d$, $H_{t,i}(S_{t+1,i}-S_{t,i}) \geq -D_t$, and thus $|\tilde H_{t,i}| \leq D_t/ c$, which is independent of $n$ and $i$.
\end{proof}

\bibliographystyle{siam}
\bibliography{hedge}

\begin{thebibliography}{10}

\bibitem{ABPS2016}
{\sc B.~Acciaio, M.~Beiglb\"ock, F.~Penkner, and W.~Schachermayer}, {\em A
  model-free version of the fundamental theorem of asset pricing and the
  super-replication theorem}, Math. Finance, 26 (2016), pp.~233--251.

\bibitem{Augustin02}
{\sc T.~Augustin}, {\em Neyman-{P}earson testing under interval probability by
  globally least favorable pairs reviewing {H}uber-{S}trassen theory and
  extending it to general interval probability}, J. Statist. Plann. Inference,
  105 (2002), pp.~149--173.

\bibitem{BDG2016}
{\sc P.~Bank, Y.~Dolinsky, and S.~G\"okay}, {\em Super-replication with
  nonlinear transaction costs and volatility uncertainty}, Ann. Appl. Probab.,
  26 (2016), pp.~1698--1726.

\bibitem{BHS12}
{\sc E.~Bayraktar, Y.-J. Huang, and Q.~Song}, {\em Outperforming the market
  portfolio with a given probability}, Ann. Appl. Probab., 22 (2012),
  pp.~1465--1494.

\bibitem{BHZ13}
{\sc E.~Bayraktar, Y.-J. Huang, and Z.~Zhou}, {\em On {H}edging {A}merican
  {O}ptions under {M}odel {U}ncertainty}, SIAM J. Financial Math., 6 (2015),
  pp.~425--447.

\bibitem{BZ15}
{\sc E.~Bayraktar and Y.~Zhang}, {\em Fundamental theorem of asset pricing
  under transaction costs and model uncertainty}, Math. Oper. Res., 41 (2016),
  pp.~1039--1054.

\bibitem{BHP2013}
{\sc M.~Beiglb{\"o}ck, P.~Henry-Labord{\`e}re, and F.~Penkner}, {\em
  Model-independent bounds for option prices---a mass transport approach},
  Finance and Stochastics, 17 (2013), pp.~477--501.

\bibitem{BS78}
{\sc D.~P. Bertsekas and S.~E. Shreve}, {\em Stochastic optimal control: the
  discrete time case}, vol.~139 of Mathematics in Science and Engineering,
  Academic Press, Inc., New York-London, 1978.

\bibitem{BBKN2015}
{\sc S.~Biagini, B.~Bouchard, C.~Kardaras, and M.~Nutz}, {\em Robust
  fundamental theorem for continuous processes}, Mathematical Finance,
  forthcoming,  (2015).

\bibitem{BCIR2013}
{\sc T.~R. Bielecki, I.~Cialenco, I.~Iyigunler, and R.~Rodriguez}, {\em Dynamic
  conic finance: pricing and hedging in market models with transaction costs
  via dynamic coherent acceptability indices}, Int. J. Theor. Appl. Finance, 16
  (2013), pp.~1350002, 36.

\bibitem{BN13}
{\sc B.~Bouchard and M.~Nutz}, {\em Arbitrage and duality in nondominated
  discrete-time models}, Ann. Appl. Probab., 25 (2015), pp.~823--859.

\bibitem{BHR2001}
{\sc H.~Brown, D.~Hobson, and L.~C.~G. Rogers}, {\em Robust hedging of barrier
  options}, Math. Finance, 11 (2001), pp.~285--314.

\bibitem{BFM2015}
{\sc M.~Burzoni, M.~Frittelli, and M.~Maggis}, {\em Model-free superhedging
  duality}, arXiv preprint arXiv:1506.06608,  (2015).

\bibitem{CHO2016}
{\sc A.~M.~G. Cox, Z.~Hou, and J.~Ob\l~\'oj}, {\em Robust pricing and hedging
  under trading restrictions and the emergence of local martingale models},
  Finance Stoch., 20 (2016), pp.~669--704.

\bibitem{CK01}
{\sc J.~Cvitani{\'c} and I.~Karatzas}, {\em Generalized {N}eyman-{P}earson
  lemma via convex duality}, Bernoulli, 7 (2001), pp.~79--97.

\bibitem{Dolinsky14}
{\sc Y.~Dolinsky}, {\em Hedging of game options under model uncertainty in
  discrete time}, Electron. Commun. Probab., 19 (2014), pp.~1--11.

\bibitem{DY79}
{\sc E.~B. Dynkin and A.~A. Yushkevich}, {\em Controlled {M}arkov processes},
  vol.~235 of Grundlehren der Mathematischen Wissenschaften [Fundamental
  Principles of Mathematical Sciences], Springer-Verlag, Berlin-New York, 1979.
\newblock Translated from the Russian original by J. M. Danskin and C. Holland.

\bibitem{FL99}
{\sc H.~F{\"o}llmer and P.~Leukert}, {\em Quantile hedging}, Finance Stoch., 3
  (1999), pp.~251--273.

\bibitem{gb08}
{\sc M.~Grant and S.~Boyd}, {\em Graph implementations for nonsmooth convex
  programs}, in Recent Advances in Learning and Control, V.~Blondel, S.~Boyd,
  and H.~Kimura, eds., Lecture Notes in Control and Information Sciences,
  Springer-Verlag Limited, 2008, pp.~95--110.

\bibitem{cvx}
\leavevmode\vrule height 2pt depth -1.6pt width 23pt, {\em {CVX}: Matlab
  software for disciplined convex programming, version 2.1}, Mar. 2014.

\bibitem{Gundel05}
{\sc A.~Gundel}, {\em Robust utility maximization for complete and incomplete
  market models}, Finance Stoch., 9 (2005), pp.~151--176.

\bibitem{Gushchin2015}
{\sc A.~Gushchin}, {\em A characterization of maximin tests for two composite
  hypotheses}, Math. Methods Statist., 24 (2015), pp.~110--121.

\bibitem{HS73}
{\sc P.~J. Huber and V.~Strassen}, {\em Minimax tests and the
  {N}eyman-{P}earson lemma for capacities}, Ann. Statist., 1 (1973),
  pp.~251--263.

\bibitem{LSY13}
{\sc T.~Leung, Q.~Song, and J.~Yang}, {\em Outperformance portfolio
  optimization via the equivalence of pure and randomized hypothesis testing},
  Finance Stoch., 17 (2013), pp.~839--870.

\bibitem{MC2010}
{\sc D.~B. Madan and A.~Cherny}, {\em Markets as a counterparty: an
  introduction to conic finance}, Int. J. Theor. Appl. Finance, 13 (2010),
  pp.~1149--1177.

\bibitem{MS2011}
{\sc D.~B. Madan and W.~Schoutens}, {\em Conic coconuts: the pricing of
  contingent capital notes using conic finance}, Math. Financ. Econ., 4 (2011),
  pp.~87--106.

\bibitem{Nutz2016}
{\sc M.~Nutz}, {\em Utility maximization under model uncertainty in discrete
  time}, Mathematical Finance, 26 (2016).

\bibitem{PRN2013}
{\sc D.~Possama\"\i, G.~Royer, and N.~Touzi}, {\em On the robust superhedging
  of measurable claims}, Electron. Commun. Probab., 18 (2013), pp.~no. 95, 13.

\bibitem{Resnick99}
{\sc S.~I. Resnick}, {\em A probability path}, Birkh\"auser Boston, Inc.,
  Boston, MA, 1999.

\bibitem{Schied04}
{\sc A.~Schied}, {\em On the {N}eyman-{P}earson problem for law-invariant risk
  measures and robust utility functionals}, Ann. Appl. Probab., 14 (2004),
  pp.~1398--1423.

\bibitem{Schied05}
\leavevmode\vrule height 2pt depth -1.6pt width 23pt, {\em Optimal investments
  for robust utility functionals in complete market models}, Math. Oper. Res.,
  30 (2005), pp.~750--764.

\bibitem{SC99}
{\sc G.~Spivak and J.~Cvitani{\'c}}, {\em Maximizing the probability of a
  perfect hedge}, Ann. Appl. Probab., 9 (1999), pp.~1303--1328.

\end{thebibliography}
\end{document}